
\documentclass[letterpaper, 10 pt, conference]{ieeeconf}  %

\IEEEoverridecommandlockouts                              %

\overrideIEEEmargins                                      %

\usepackage{amsmath,amssymb,amstext}
\usepackage{physics}
\usepackage{xcolor}

\colorlet{lcolor}{blue!40!black}
\colorlet{ucolor}{magenta!40!black}
\colorlet{ccolor}{green!40!black}

\usepackage[colorlinks=true,%
linkcolor=lcolor,%
urlcolor=ucolor,%
citecolor=ccolor]{hyperref}

\usepackage{algorithm}
\usepackage{algorithmic}

\usepackage{pgfplots} %
\pgfplotsset{compat=newest} 
\pgfplotsset{plot coordinates/math parser=false}

\usepackage{mathtools}

\DeclarePairedDelimiter\floor{\lfloor}{\rfloor}

\usepackage{tikz}
\usepackage{circuitikz}
\usetikzlibrary{positioning}
\usetikzlibrary{backgrounds}
\usetikzlibrary{shapes,arrows,chains} 

\newtheorem{asum}{Assumption}
\newtheorem{theo}{Theorem}
\newtheorem{rema}{Remark}

\newtheorem{lemma}{Lemma}
\newtheorem{prop}{Proposition}
\newtheorem{coro}{Corollary}

 \newcommand{\conc}[2]{\left[{#1}\right]_{{#2}}}

 \newcommand{\CLF}{\gamma}
 
 \newcommand{\boundd}{{b}}

 \newcommand{\partder}[1]{V_\CLF^{'}(#1) }
 \newcommand{\timeder}[2]{\mathcal{L}_fV_\CLF(#1,#2)}
 \newcommand{\tone}{r}
 \newcommand{\ttwo}{m}
 \newcommand{\modelbased}{model-based }
  \newcommand{\modelbasedPETC}{MB-PETC }
  \newcommand{\remove}[1]{}

\title{\LARGE \bf
	Model-Based Nonlinear  Periodic Event-Triggered Control for Continuous-Time Systems with Sampled-Data Prediction
}

\author{Michael Hertneck, Steffen Linsenmayer and Frank Allg\"ower%
\thanks{The authors thank the German Research Foundation (DFG) for support of this work within grant AL 316/13-2  and within the German Excellence Strategy under grant EXC-2075.}%
\thanks{The authors are with the Institute for Systems Theory and Automatic Control, University of Stuttgart, 70569 Stuttgart, Germany (email: $\{$hertneck, linsenmayer,  allgower$\}$@ist.uni-stuttgart.de).}
}

	{%
	\end{oldthebibliography}%
}
\bibliographystyle{IEEEtran}
\begin{document}

\pubid{\begin{minipage}{\textwidth}\ \\[12pt] This version has been accepted for publication in Proc. European Control Conference (ECC), 2020.  Personal use of this material is permitted. Permission from EUCA must be obtained for all other uses, in any current or future media, including reprinting/republishing this material for advertising or promotional purposes, creating new collective works, for resale or redistribution to servers or lists, or reuse of any copyrighted component of this work in other works.\end{minipage}}

\maketitle
\begin{abstract}
	
In this paper, we present a model-based periodic event-triggered control mechanism for nonlinear continuous-time Networked Control Systems. A sampled-data prediction of the system behavior is used at the actuator to reduce the amount of required communication while maintaining a user-defined performance level. This prediction is based on a possibly inaccurate discretization of the nonlinear system dynamics and can be implemented on simple hardware. Nevertheless, guarantees for asymptotic stability and a user-defined performance level are given for the periodic event-triggered control (PETC) mechanism, whilst the reduction of the required amount of transmissions of state information depends on the quality of the prediction.  We discuss furthermore how the prediction can be implemented.
 The performance of the proposed PETC mechanism is illustrated with a numerical example. This paper is the accepted version of \cite{hertneck20ecc}, containing also the proofs of the main results.
	
\end{abstract}

\linespread{0.97}
\section{Introduction}
A major challenge in the field of Networked Control Systems (NCS) is to find sampling and control strategies that require only a small amount of communication while still guaranteeing stability and a certain degree of performance for the control system \cite{Hespanha2007}. By prolongating the time between transmissions, a reduction of the  required amount of communication for the NCS can be achieved.  However, this comes often at the cost of decreasing performance.  
In literature, there are different sophisticated techniques to extend the time span between transmissions, while still obtaining a good trade off between these two potentially conflicting objectives. 
\pubidadjcol

A widespread approach to achieve such a good trade off, is event-triggered control (ETC) \cite{Heemels2012,tabuada2007event,ong2018event,proskurnikov2018lyapunov}. In ETC, control updates are not send periodically, but instead according to a state dependent trigger rule. This stands in contrast to time-triggered control, where control updates are triggered periodically with a fixed sampling period. Since it is impossible to evaluate the state dependent trigger rule continuously on digital hardware, periodic event triggered control (PETC) \cite{proskurnikov2018lyapunov,heemels2013periodic,Wang2016} has been developed. Here, the trigger rule is only evaluated periodically at fixed sampling times, but nevertheless stability can be guaranteed. 
\remove{Nevertheless, stability and a certain performance level can be guaranteed, whilst the required amount of communication can be reduced compared to time-triggered control.}

Another technique that is capable of reducing the amount of communication, while still guaranteeing a certain performance level, is based on using a model at the actuator to predict the plant behavior. Then, the system input is calculated based on this prediction, if no current state information is available \cite{montestruque2003model,polushin2008model,mastellone2005model}. We will refer to this technique subsequently as model-based networked control (MBNC).  

To exploit the advantages of both PETC and MBNC, it is hence desirable to combine both techniques. For ETC, MBNC and linear systems, this has been done successfully in \cite{lunze2010state,garcia2013model}. A linear model-based PETC framework has been presented in \cite{Heemels2013}. Whilst \modelbased PETC (MB-PETC) for linear systems is thus well understood, most nonlinear PETC approaches from literature do not take into account a prediction at the actuator. 
\remove{Thus, there are only few nonlinear \modelbasedPETC approaches available. }
For discrete-time systems that satisfy a dissipation inequality, nonlinear \modelbasedPETC was investigated in \cite{mcCourt2014model}. Moreover, the nonlinear PETC approach from \cite{Borgers2018a} can in principle be modified to include a prediction with a continuous-time prediction model at the actuator, even though \cite{Borgers2018a} focuses on the prediction-free case. Since the research for nonlinear MB-PETC is hence still at an early stage, it is desirable to investigate \modelbasedPETC mechanisms. 

In this paper, we present an \modelbasedPETC mechanism for nonlinear systems that is \pubidadjcol 
based on the linear PETC approach from \cite{linsenmayer2018event2}. The proposed mechanism uses a possibly inaccurate discretization of the considered plant as sampled data prediction model. The prediction of the system state, that is used to determine the system input at the actuator if no current state information is available,  thus only needs to be updated periodically at sampling times of the PETC mechanism. This stands in contrast to \cite{Borgers2018a}, where a continuous-time prediction model would be required, that cannot be implemented on digital hardware. The sampled-data prediction model, on the contrary makes the proposed approach feasible, even if the actuator has only very limited computational capabilities. 
 Stability and a chosen convergence speed are guaranteed independent of the actual choice of the prediction model.
  To achieve this, a copy of the prediction model is used by the PETC mechanism at the actuator side to include knowledge about the next input that will be applied to the system for the trigger decision. The reduction of the required amount of communication depends on the quality of the prediction model. We discuss how a suitable prediction model can be obtained.
  Finally, we demonstrate with a numerical example that the proposed approach can improve the system performance and  nevertheless significantly reduce the required amount of communication in comparison to the prediction-free PETC mechanism from \cite{hertneck2019nonlinear}.

The remainder of this paper is structured as follows. In Section~\ref{sec_setup}, we introduce the considered setup and specify the control objective. Then, we discuss how the sampled-data prediction can be implemented on simple hardware in Section~\ref{sec_prediction}. Some basic results for the PETC design are presented in Section~\ref{sec_basic}. Building up on this, we propose the \modelbasedPETC mechanism in Section~\ref{sec_main}. A conclusion in Section~\ref{sec_conc} completes the paper. Some spacious proofs are given in the appendix.

\subsubsection*{Notation}
The positive (respectively nonnegative) real numbers are denoted by $\mathbb{R}_{>0}$,  respectively  $\mathbb{R}_{\geq 0} := \mathbb{R}_{> 0} \cup \{0\} $. The positive (respectively nonnegative) natural numbers are denoted by $\mathbb{N}$, respectively $\mathbb{N}_0:=\mathbb{N}\cup  \left\lbrace 0 \right\rbrace $. $\floor{x}$ returns the next integer to $x$ that is smaller than or equal to $x$. A continuous function $\alpha: \mathbb{R}_{\geq 0} \rightarrow \mathbb{R}_{\geq 0}$ is a class $ \mathcal{K}$ function if it is strictly increasing and $\alpha(0) = 0$. The notation $t^-$ is used as $t^- := \lim\limits_{s<t,s\rightarrow t} s$. A continuous function $V:\mathbb{R}^n \rightarrow \mathbb{R}$ is positive definite if $V(0) = 0$ and $V(x)>0$ for all  $x\neq 0$.  $V'(k)$ denotes $\left. \frac{\partial V(x)}{\partial x} \right|_{x = k}$.  Furthermore, we use in a slight abuse of notation $\mathcal{L}_fV(x,u)$ to denote the Lie derivative of $V$ along the vector field  $f:\mathbb{R}^n \times \mathbb{R}^m \rightarrow \mathbb{R}^n$, i.e., $\mathcal{L}_fV(x,u) = V^{'}(x)  f(x,u) $.
We use   $\conc{f}{j}(\cdot) = \underbrace{f\circ\dots\circ f (\cdot)}_{j-times}$ as operator for $j$ concatenations of a function $f$. 
\section{Setup}
\label{sec_setup}
In this section, we specify the setup for this paper, including the considered control system, the \modelbasedPETC strategy and the control objective.
\subsection{Control System}
The setup that we consider is sketched in Figure~\ref{fig_setup}. The plant to be controlled is described by a nonlinear, time-invariant system
\begin{equation}
\label{eq_sys_cont}
\dot{x} = f(x,u)
\end{equation}
with a vector valued function $f:\mathbb{R}^{n_x}\times\mathbb{R}^{n_u} \rightarrow \mathbb{R}^{n_x}$ satisfying $f(0,0) = 0$, %
the system state $ x(t)\in \mathbb{R}^{n_x}$ with initial condition $x(0) = x_0$ and the input $u(t)\in\mathbb{R}^{n_u}$. The input is generated by a controller as
\begin{equation}
\label{eq_cont_fb}
u = \kappa(\hat{x})
\end{equation} 
with the nonlinear feedback law $\kappa:\mathbb{R}^{n_x}\rightarrow\mathbb{R}^{n_u}$ and a prediction $\hat{x}(t)$ of the system state $x(t)$ that is generated at the actuator based on received state information and a prediction mechanism.

The system state is sampled uniformly with a fixed sampling period $h \in \mathbb{R}_{>0}$ that will be specified later. Thus, at each discrete time instant satisfying $t=kh$ for some $k\in\mathbb{N}_0$, a new sample of the current system state is drawn. We define the sequence of sampling times as $(\tau^s_k)_{k\in\mathbb{N}_0}$, i.e., $\tau^s_k = kh$ and introduce the infinite set of sampling times as $\mathcal{T}^s := \left\lbrace \tau^s_0,\tau^s_1,\tau^s_2,\dots \right\rbrace$. However, to reduce the network load, current state information is not transmitted at each sampling time  to the actuator. Instead, the time instants, when state information is transmitted to the actuator are given by the infinite sequence $(\tau^a_l)_{l \in \mathbb{N}_0}$  and define a discrete set 
$\mathcal{T}^a := \left\lbrace \tau^a_0, \tau^a_1,\tau^a_2,\dots \right\rbrace \subseteq \mathcal{T}^s.$
The set $\mathcal{T}^a$ depends on a PETC mechanism that will be designed in this paper. The PETC mechanism will guarantee that $\tau^a_0 = \tau^s_0 = 0$.

The update of the predicted state $\hat{x}$ when current state information is received at the actuator is represented by $\hat{x}(\tau^a_l)  = x(\tau^a_l)$ for all $\tau^a_l \in \mathcal{T}^a$.  Between the update times, a state prediction is used at the actuator. We assume however, that a continuous-time prediction model cannot be implemented at the actuator. This is for example the case for digital hardware. Instead, in this paper, a sampled-data approach is used for the prediction. This approach consists of updating $\hat{x}$ at each sampling time when no state information is received according to 
\begin{equation}
\label{eq_cont_est}
\hat{x}(\tau^s_k) = f_p(\hat{x}(\tau^s_{k-1}))~ 
\end{equation} 
if $\tau^s_k \in \mathcal{T}^s \backslash \mathcal{T}^a$,
with a continuously differentiable prediction function $f_p$ which can be chosen arbitrarily as long as $f_p(0) = 0$.
Between sampling times, $\hat{x}$ is kept constant, i.e., $\dot{\hat{x}}(t) = 0$ for all $ t \notin \mathcal{T}^s$. This allows the implementation of the $\hat{x}$ dynamics even on simple hardware.
Zero-order hold (ZOH) feedback is a special case of the prediction with $f_p(\hat{x}) = \hat{x}$ and resembles the setup of \cite{hertneck2019nonlinear}.

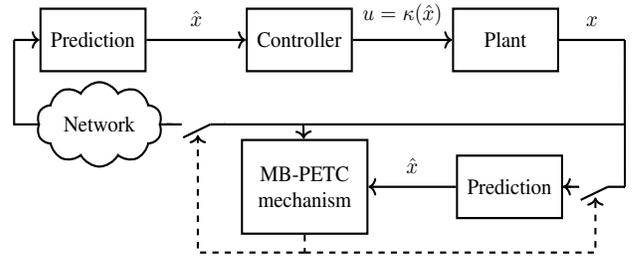
\begin{figure}
	\resizebox{\linewidth}{3.5cm}{
	\begin{circuitikz}[line width=1.0]
		\usetikzlibrary{calc}
		\ctikzset{bipoles/thickness=1}
		\node[draw,minimum width=1.8cm,minimum height=1cm,anchor=south west] at (0,0) (node1){Plant};
		\node[draw,left = 17mm of node1, minimum width=1.8cm,minimum height=1cm] (node3) {Controller};
		\node[draw,left = 17mm of node3, minimum width=1.8cm,minimum height=1cm] (node4) {Prediction};

		\node[below = 17	mm of node3] (helpnode13) {};
		\coordinate (Q13) at (helpnode13);
		\node[draw,right = -10mm of Q13, minimum width=2.15cm,minimum height=1.5cm,align = center] (node5) {\modelbasedPETC\\ mechanism};
		\node[draw,right = 15mm of node5, minimum width=1.8cm,minimum height=1cm,align = center] (node6) {Prediction};
		
		\node[below = 7 mm of node3] (helpnode0) {};
		\node[left = 10 mm of helpnode0] (helpnode1) {};
		\node[left = 3 mm of node4] (helpnode2) {};
		\node[left = 5 mm of helpnode1] (helpnode3) {};
		\node[below = 1.5 mm of node5] (helpnode4) {};
		\node[left = 2 mm of helpnode1] (helpnode5) {};
		\node[below = -1 mm of helpnode5] (helpnode6) {};
		\node[right = 9.9mm of node1] (helpnode7) {};
		\node[right = 9.25mm of node6] (helpnode8) {};
		\node[right = -1.5mm of node6] (helpnode9) {};
		\node[right = 3 mm of helpnode9] (helpnode10) {};
		\node[below = -1 mm of helpnode10] (helpnode11) {};
		\node[right = 0 mm of node6] (helpnode12) {};
		
		\node[right = -2 mm of helpnode0] (helpnode14) {};

			\node [cloud, draw, left =2mm of helpnode3, cloud puffs=10,cloud puff arc=120, aspect=2, inner ysep=0.5em](cloud) {Network};

		\coordinate (Q0) at (helpnode0);
		\coordinate (Q1) at (helpnode1);
		\coordinate (Q2) at (helpnode2);
		\coordinate (Q4) at (helpnode4);
		\coordinate (Q6) at (helpnode6);
		\coordinate (Q7) at (helpnode7);
		\coordinate (Q8) at (helpnode8);
		\coordinate (Q9) at (helpnode9);
		\coordinate (Q11) at (helpnode11);
		\coordinate (Q12) at (helpnode12);
		\coordinate (Q14) at (helpnode14);

		 \draw (Q1) to[normal open switch] ([xshift=-1cm]helpnode1);
		 \draw (Q8) to[normal open switch] (Q9);
		 
		 \draw [-,line width = 1pt] (helpnode3) -- (cloud);
		 \draw [-,line width = 1pt] (cloud) -| (Q2);
		  \draw [->,line width = 1pt] (Q2) -- (node4);
		  \draw [-,line width = 1pt,dashed] (node5) -- (Q4);
		  \draw[->,line width = 1pt,dashed] (Q4) -| (Q6);
		  \draw [-,line width = 1pt] (node1) -- node [text width=1.5cm,midway,above = 0.25em,align = center] {$x$} (Q7);
		  \draw [-,line width = 1pt] (Q7) |- (Q1);
		  \draw [->,line width = 1pt] (Q14) -- (node5);
		  \draw [-,line width = 1pt] (Q7) -- (Q8);
		  \draw[->,line width = 1pt,dashed] (Q4) -| (Q11);
		  \draw [->,line width = 1pt] (Q12) -- (node6);
		  
		  \draw [->,line width = 1pt] (node4) -- node [text width=1.5cm,midway,above = 0.25em,align = center] {$\hat{x}$} (node3);
		  \draw [->,line width = 1pt] (node3) -- node [text width=1.5cm,midway,above = 0.25em,align = center] {$u = \kappa(\hat{x})$} (node1);
		  \draw [->,line width = 1pt] (node6) -- node [text width=1.5cm,midway,above = 0.25em,align = center] {$\hat{x}$} (node5);

	\end{circuitikz}
}
	\vspace{-5mm}
	\caption{Sketch of the considered setup.}
	\label{fig_setup}
	\vspace{-7mm}
\end{figure}

The closed-loop system consists of system \eqref{eq_sys_cont}, controller \eqref{eq_cont_fb}, prediction \eqref{eq_cont_est}
and its reset condition and can be modeled as a discontinuous dynamical system (DDS) with state $\xi = \left[\xi_1^\top, \xi_2^\top \right]^\top =  \left[x^\top, \hat{x}^\top\right]^\top$  for $k \in \mathbb{N}_0$ as 
\begin{align}
\begin{split}
\dot{\xi}(t) =& 
\begin{bmatrix}
f(\xi_1(t),\kappa(\xi_2(t))) \\
0
\end{bmatrix},~\tau^s_{k} \leq t < \tau^s_{k+1}
, \\
\xi(t) =& \begin{bmatrix}
\xi_1(t^-)\\
f_p(\xi_2(\tau^s_k))
\end{bmatrix}, t  = \tau^s_{k+1} \wedge t \notin  \mathcal{T}^a 
\\
\xi(t) =& \begin{bmatrix}
 \xi_1(t^-)\\
 \xi_1(t^-)
\end{bmatrix},  t  = \tau^s_{k+1} \wedge t \in  \mathcal{T}^a 
\label{eq_cont_sys_comp}
\end{split}
\end{align}
 with $\xi(0) = \xi_0 := \left[x_0^\top,\hat{x}_0^\top\right]^\top$ with $\hat{x}_0 = x_0$ due to $\tau^a_0 = 0$.

\subsection{PETC Triggering Strategy}
The time instants when the system state is transmitted to the actuator are determined using PETC, i.e., according to a trigger mechanism that is evaluated at each sampling time, and thus at each element of $\mathcal{T}^s$. As depicted in Figure~\ref{fig_setup}, the PETC trigger mechanism has access to a copy of the prediction that is used at the actuator. Using the feedback law $\kappa(\hat{x})$, it can thus also include knowledge about the input that will be applied to the plant during the next sampling period in the trigger decision. If the trigger rule of the PETC mechanism is violated at a sampling time, then the current system state is transmitted to the actuator.

In order to design the PETC mechanism, we assume that a stabilizing controller for the continuous-time system has already been synthesized. Thus, a continuous-time feedback law and a Lyapunov function are known, that satisfy the following assumption.
\begin{asum}
	\label{asum_cont_clf}
	There is a continuous, positive definite function $V_\CLF:\mathbb{R}^n \rightarrow \mathbb{R}$, satisfying	
	\begin{equation}
	\label{eq_cont_clf2}
	\alpha_1 {(\norm{x})} \leq V_\CLF(x) \leq \alpha_2 {(\norm{x})} 	
	\end{equation}	
	\begin{equation}
	\label{eq_cont_clf}
	V_\CLF^{'}(x(t))  f(x(t),\kappa(x(t))) \leq - \gamma(V_\CLF(x(t)))
	\end{equation}
	with class $\mathcal{K}$ functions $\gamma, \alpha_1, \alpha_2 $.
\end{asum}
Finding $\kappa$ and $V_\CLF$ that satisfy Assumption~\ref{asum_cont_clf} is a common but non-trivial problem in control theory for continuous-time systems and is widely discussed in literature, see e.g.  \cite{khalil2002nonlinear}. 

Subsequently, we consider local results for a level set of $V_\CLF$, that can be defined as $\mathcal{X}_c := \left\lbrace x | V_\CLF(x) \leq c \right\rbrace$ for a chosen $c \in 
\mathbb{R}_{>0}$. Next, we present a criterion on the convergence speed of the system state, which will be used in this paper as performance measure, and define the control objective that has to be guaranteed by the \modelbasedPETC mechanism.

\subsection{Convergence Criterion and Control Objective}
We consider in this paper the same averaged convergence criterion that was used in \cite{ong2018event} for event-triggered control with performance barrier, except for a time shift of one sampling period. Let $S(t,x_0)$ be the solution to 
\begin{equation}
\label{eq_cont_def_s}
\frac{d}{dt} S(t,x_0) = -\sigma \gamma(S(t,x_0)), ~S(0,x_0) = V_\CLF(x_0).
\end{equation}
for a chosen $\sigma \in 
\left(0,1\right)$. We require as performance criterion %
\begin{equation}
\label{eq_cont_v_smaller}
V_\CLF(x(t+h)) \leq S(t,x_0)
\end{equation}
for all $t>0$. This means, that we require in average the same convergence speed as can be guaranteed for  continuous-time feedback except for a tunable deviation, that is described by $\sigma$, and a time shift of one sampling period. Note that this time shift is small as long as $h$ is small. We denote subsequently the maximum admissible sampling period such that \eqref{eq_cont_v_smaller} can be guaranteed  as $\sigma$-MASP.
The following proposition, that is taken from \cite{hertneck2019nonlinear} can be used to guarantee satisfaction of \eqref{eq_cont_v_smaller} without explicit knowledge $S(t,x_0)$. \remove{, and will turn out to be useful in the PETC design.}

\begin{prop}[Proposition~3 from \cite{hertneck2019nonlinear}]
	\label{prop_cont_conv}
	Consider two constants $C_1$, $C_2 ~\in \mathbb{R}_{\geq 0}$,  and $S(t,x_0)$ defined by \eqref{eq_cont_def_s}. If 
$C_1   	\leq C_2- \tone \sigma  \gamma(C_2) \nonumber$
	and  $C_2 \leq S(s,x_0)$  	holds for $  \tone,s \in \mathbb{R}_{\geq 0}$,
	then 
	$C_1 \leq S(s+\tone,x_0).$ %
\end{prop}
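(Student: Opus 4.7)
The plan is to work directly with the shifted trajectory $\tilde S(t):=S(s+t,x_0)$, which by the time-invariance of the ODE in \eqref{eq_cont_def_s} is itself a solution, starting from $\tilde S(0)=S(s,x_0)\ge C_2$. Since $\gamma$ is a class $\mathcal K$ function and $\tilde S(t)\ge 0$, the right-hand side $-\sigma\gamma(\tilde S(t))$ is nonpositive, so $\tilde S$ is nonincreasing. I want to establish $\tilde S(r)\ge C_1$; the hypothesis $C_1\le C_2-r\sigma\gamma(C_2)$ suggests bounding the total drop of $\tilde S$ on $[0,r]$ by the linear quantity $r\sigma\gamma(C_2)$.

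Next, I would split into two cases according to whether $\tilde S$ ever reaches $C_2$ on the interval $[0,r]$. If $\tilde S(t)\ge C_2$ for every $t\in[0,r]$, then in particular $\tilde S(r)\ge C_2\ge C_2-r\sigma\gamma(C_2)\ge C_1$ and the claim is immediate. Otherwise, by continuity there exists a first time $t^\ast\in[0,r]$ with $\tilde S(t^\ast)=C_2$, and by the monotonicity observed above, $\tilde S(t)\le C_2$ for all $t\in[t^\ast,r]$. Since $\gamma$ is strictly increasing, this yields $\gamma(\tilde S(t))\le\gamma(C_2)$ on $[t^\ast,r]$, so that
\begin{equation*}
\tilde S(r)=\tilde S(t^\ast)-\int_{t^\ast}^{r}\sigma\gamma(\tilde S(t))\,dt\;\ge\; C_2-(r-t^\ast)\sigma\gamma(C_2)\;\ge\; C_2-r\sigma\gamma(C_2)\;\ge\; C_1,
\end{equation*}
where the final inequality is exactly the hypothesis. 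Retranslating $\tilde S(r)=S(s+r,x_0)$ gives the conclusion.

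I do not expect any serious obstacle here, since the argument is a one-step differential-inequality comparison. The only subtlety is to justify the monotonicity and nonnegativity of $\tilde S$ so that the bound $\gamma(\tilde S(t))\le\gamma(C_2)$ is valid whenever $\tilde S(t)\le C_2$; this uses only $\gamma\in\mathcal K$ and the fact that $\gamma(0)=0$ prevents $\tilde S$ from crossing zero. If one prefers to avoid a case split, the same bound can be obtained in a single shot by noting $\tilde S(t)\le \max\{\tilde S(0),C_2\}$ is not the useful bound—one really needs the ``first time below $C_2$'' construction to get the linear estimate, which is the only nontrivial ingredient.
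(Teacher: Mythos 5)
Your proof is correct. Note that the paper does not actually prove Proposition~\ref{prop_cont_conv}: it is imported verbatim as Proposition~3 from \cite{hertneck2019nonlinear}, so there is no in-paper proof to compare against. Your argument---time-shifting to $\tilde S(t)=S(s+t,x_0)$, using that $\tilde S$ is nonincreasing and nonnegative, and introducing the first time $t^\ast$ at which $\tilde S$ reaches $C_2$ so that $\gamma(\tilde S(t))\leq\gamma(C_2)$ on $[t^\ast,\tone]$ and the total decrease is bounded by $\tone\sigma\gamma(C_2)$---is exactly the standard comparison argument one expects here (and the one used in the cited reference); the case split on whether $\tilde S$ ever drops to $C_2$ is indeed the one nontrivial ingredient, since $y\mapsto y-\tone\sigma\gamma(y)$ need not be monotone.
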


The goal of this paper is to present an \modelbasedPETC mechanism, that exploits the prediction at the actuator to improve the system performance in comparison to existing approaches without prediction, whilst reducing the amount of transmissions of the system state. Nevertheless, asymptotic stability of the origin of the closed-loop system that is represented by the DDS~\eqref{eq_cont_sys_comp} and satisfaction of the convergence criterion~\eqref{eq_cont_v_smaller} have to be guaranteed for all initial conditions from the level set $\mathcal{X}_c$, even in case the prediction function is very inaccurate due to computational limitations of the actuator. 

\section{Sampled-Data Prediction for a Continuous-Time Plant} 
\label{sec_prediction}
In this section, we discuss options for choosing the prediction function $f_p$.
The choice of $f_p$ does significantly influence the closed-loop performance and the required amount of transmissions, and is thus important for the \modelbasedPETC. Choosing $f_p(x) = x$, which resembles the ZOH case and which is done for most nonlinear PETC mechanisms from literature, will in general not deliver a good prediction.  
Instead, a small amount of transmissions can e.g. be achieved, if  $f_p$ approximates the exact discretization of the plant dynamics \eqref{eq_sys_cont} with a small enough approximation error since then, the difference between $ x(t)$ and $\hat{x}(t)$ grows only slowly. If the exact discretization of $f$ could be used, one single transmission would even suffice to stabilize the plant if no additional perturbations occur. However, finding the exact discretization is impossible for most nonlinear systems. Moreover, the computational capabilities of the actuator may be limited, making the evaluation of an exact discretization impossible.
We present now different options how $f_p$ can be chosen, depending on the actuator.
\subsubsection*{Runge-Kutta discretization}
The prediction function $f_p$ can be chosen as a Runge-Kutta discretization of $f$. This works well if the actuator has suitable computational capabilities to evaluate the nonlinear function $f$ at any point in the state space.
\subsubsection*{Look-up tables}	
The prediction function $f_p$ can be constructed from a look-up table that is based on sampled values of a precise but computationally demanding discretization of $f$. Continuous differentiability of $f_p$ can then be ensured by interpolation between the samples. This works well, even if the actuator has only very limited computational capabilities.
\subsubsection*{Regression of a precise discretization}
A regression method as e.g. a neural network can be used to approximate a precise but computationally demanding discretization of $f$. Such regressions can be implemented on simple hardware with limited computational capabilities and  nevertheless often allow a good approximation. 
\remove{\subsubsection*{Sampling a continuous model}
If a continuous-time model according to  \eqref{eq_sys_cont} can be simulated at the actuator, e.g. in hardware,  then the prediction can be based on samples of this continuous-time model. }
\subsubsection*{Transmitting sequences of future inputs}
For packet based networks, a sequence of future updates of $\hat{x}$ can be computed by the trigger mechanism based on a precise but computationally demanding discretization of $f$, and an input sequence can be transmitted to the actuator. Then no computational capabilities are required at the actuator at all. In turn, the required packet size may be larger for that approach as for the other approaches presented here.

\begin{rema}
	Due to limited computational capabilities of the actuator, it may be impossible to compute $\kappa$ exactly at the actuator. Instead, the above discussed approaches can also be used to approximate $\kappa$. This approximation can be easily included in the proposed \modelbasedPETC mechanism. To obtain still guarantees, it is then only required to transmit an exactly computed input together with the current system state over the network  to the actuator at transmissions times. 
\end{rema}

\section{PETC Preliminaries}
\label{sec_basic}
In this section, we first recap some results from \cite{proskurnikov2018lyapunov} and \cite{hertneck2019nonlinear}. Then, we extend a sufficient stability condition for DDS based on non-monotonic Lyapunov functions from \cite{michel2015stability} to the setup of this paper. Both parts of this section will be used later to design the PETC mechanism.

\subsection{Preliminary Results from \cite{proskurnikov2018lyapunov} and \cite{hertneck2019nonlinear}}
First, we recap from \cite{proskurnikov2018lyapunov} how a time dependent and state independent upper bound on the time derivative of $V_\CLF(x(t))$ can be computed. This bound is used in \cite{proskurnikov2018lyapunov} and \cite{hertneck2019nonlinear}  to design PETC mechanisms and to  compute a lower bound on the $\sigma$-MASP for which stability and the convergence criterion  \eqref{eq_cont_v_smaller} can be guaranteed. We will use the bound on the time derivative of $V_\CLF(x(t))$ for the design of the \modelbasedPETC mechanism. 

First, we need some assumptions that specify the considered setup. A discussion of these assumptions is omitted here due to spacial limitations and can be found in \cite{proskurnikov2018lyapunov} and \cite{hertneck2019nonlinear}.

\begin{asum}
	\label{as_cont_pro1}
	(cf.  Assumption~1 and 2 from \cite{proskurnikov2018lyapunov}) For the chosen $c\in\mathbb{R}_{>0}$, there is a finite Lipschitz constant $L_{1,c}$ satisfying
$	L_{1,c} \overset{\triangle}{=} \underset{x_1,x_2,x_3\in\mathcal{X}_c, x_1 \neq x_2}{\sup} \frac{\norm{f(x_1,~\kappa(x_3))-f(x_2,~\kappa(x_3))}}{\norm{x_1-x_2}}.$
\end{asum}

\begin{asum}
	\label{as_cont_pro2}
	(cf.	Assumption~3 in \cite{proskurnikov2018lyapunov})   For the chosen  $c\in\mathbb{R}_{>0}$, there is a finite Lipschitz constant $L_{2,c} \in \mathbb{R}$ satisfying
	$L_{2,c} \overset{\triangle}{=} \underset{x_1,x_2\in\mathcal{X}_c, x_1 \neq x_2}{\sup} \frac{\norm{V_\CLF^{'}(x_1) -V_\CLF^{'}(x_2) }}{\norm{x_1-x_2}}.$
\end{asum}

\begin{asum}
	\label{as_cont_pro3}
	(cf. Assumption~4, resp. Lemma~1 from \cite{proskurnikov2018lyapunov}) For the chosen $c\in\mathbb{R}_{>0}$, there is a positive definite function $M_c:\mathbb{R}^n \rightarrow \mathbb{R},$ that is bounded on $\mathcal{X}_c$, satisfying for all $x\in\mathcal{X}_c$
	\begin{align*}
	&\norm{V_\CLF^{'}(x) } \norm{f(x,~\kappa(x))} + \norm{f(x,~\kappa(x))}^2\\
	\leq& M_c(x) \abs{\partder{x} f(x,~\kappa(x))}.
	\end{align*}
\end{asum}

In order to obtain a time dependent but state independent bound on $V_\CLF(x(t))$ for a constant input, we consider the additional Cauchy problem $\dot{\tilde{x}}(t) = f(\tilde{x},u^*), \tilde{x}(0) = \tilde{x}_0 \in \mathcal{X}_c$  for the chosen $c \in \mathbb{R}_{>0}$ and some constant $u^*$. We define $t^*$ as the first time after $t = 0$ for which $V_\CLF(\tilde{x}(t)) \geq c$  and $\triangle_*(\tilde{x}_0,u^*) = \left[0, t^*\right]$. Note that a unique solution to the Cauchy problem exists on $\left[0,t^*\right]$ due to Assumption~\ref{as_cont_pro1}. The following upper bound on the time derivative of $V_\CLF(\tilde{x}(t))$ has been derived in \cite{proskurnikov2018lyapunov}.

\begin{coro}
	\label{coro_cont_lyap}
	(deviated from Corollary~4 from \cite{proskurnikov2018lyapunov}, cf. Corollary 1 from \cite{hertneck2019nonlinear})	Let Assumptions~\ref{as_cont_pro1} and \ref{as_cont_pro2} hold for the  chosen $c\in\mathbb{R}_{>0}$. Let $\tilde{x}_0,~\tilde{x}_1 \in \mathcal{X}_c$ , $~u^* = \kappa(\tilde{x}_1)$, $t\in \triangle_*(\tilde{x}_0,u^*) \cap \left[0, (1+2L_{1,c})^{-1}\right]$ and $\tilde{x}(t)$ be the solution of $\dot{\tilde{x}}(t) = f(\tilde{x}(t),u^*), \tilde{x}(0) = \tilde{x}_0$. Then, 
	\begin{align}
	&\abs{\timeder{\tilde{x}(t)}{u^*} - \timeder{\tilde{x}_0}{u^*}} \nonumber \\
	\leq& \sqrt{t} \mu_c \left(\norm{ V_\CLF^{'}(\tilde{x}_0)}  \norm{f(\tilde{x}_0,u^*)} + \norm{f(\tilde{x}_0,u^*)}^2 \right), \nonumber
	\end{align}
	where $	\mu_c \triangleq \sqrt{e} \max \left\lbrace L_{1,c},L_{2,c}(1+L_{1,c} \sqrt{e})\right\rbrace.$
\end{coro}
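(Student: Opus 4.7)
The plan is to follow the argument from \cite{proskurnikov2018lyapunov}, with the main ingredients being (i) a splitting of the Lie derivative difference into Lipschitz-controlled pieces and (ii) a Gr\"onwall-type bound on $\|\tilde{x}(t)-\tilde{x}_0\|$. First I would add and subtract an intermediate term to write
\begin{align*}
\timeder{\tilde{x}(t)}{u^*} - \timeder{\tilde{x}_0}{u^*}
=&\bigl(\partder{\tilde{x}(t)} - \partder{\tilde{x}_0}\bigr) f(\tilde{x}(t),u^*) \\
&+ \partder{\tilde{x}_0}\bigl( f(\tilde{x}(t),u^*) - f(\tilde{x}_0,u^*)\bigr).
\end{align*}
Assumption~\ref{as_cont_pro1} then bounds the second difference by $L_{1,c}\|\tilde{x}(t)-\tilde{x}_0\|$, and Assumption~\ref{as_cont_pro2} bounds the first difference by $L_{2,c}\|\tilde{x}(t)-\tilde{x}_0\|$. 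It remains to control $\|\tilde{x}(t)-\tilde{x}_0\|$ and $\|f(\tilde{x}(t),u^*)\|$ in terms of quantities evaluated at $\tilde{x}_0$.

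Next I would integrate $\dot{\tilde{x}} = f(\tilde{x},u^*)$, use $\|f(\tilde{x}(s),u^*)\| \leq \|f(\tilde{x}_0,u^*)\| + L_{1,c}\|\tilde{x}(s)-\tilde{x}_0\|$ (again from Assumption~\ref{as_cont_pro1}), and apply the integral form of Gr\"onwall's inequality to obtain
$\|\tilde{x}(t)-\tilde{x}_0\| \leq t\,\|f(\tilde{x}_0,u^*)\|\,e^{L_{1,c}t}$.
The time restriction $t \leq (1+2L_{1,c})^{-1}$ is exactly what I need here: it forces $L_{1,c}t \leq 1/2$, hence $e^{L_{1,c}t}\leq\sqrt{e}$, which is where the factor $\sqrt{e}$ entering $\mu_c$ comes from. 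A second consequence, $\|f(\tilde{x}(t),u^*)\| \leq \|f(\tilde{x}_0,u^*)\|(1+L_{1,c}\sqrt{e}\,t)$, bounds the remaining $f(\tilde{x}(t),u^*)$ factor and, after upper-bounding $L_{1,c}\sqrt{e}\,t$ by $L_{1,c}\sqrt{e}$ on the same interval, produces the $(1+L_{1,c}\sqrt{e})$ factor appearing inside $\mu_c$.

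Putting the two Lipschitz estimates together yields a bound of the form
\begin{align*}
&\bigl|\timeder{\tilde{x}(t)}{u^*} - \timeder{\tilde{x}_0}{u^*}\bigr| \\
&\leq t\sqrt{e}\,\Bigl(L_{1,c}\norm{\partder{\tilde{x}_0}}\norm{f(\tilde{x}_0,u^*)} \\
&\quad + L_{2,c}(1+L_{1,c}\sqrt{e})\norm{f(\tilde{x}_0,u^*)}^2\Bigr).
\end{align*}
The last step is the only slightly subtle one: the statement has $\sqrt{t}$ rather than $t$. Since the admissible interval satisfies $t\leq (1+2L_{1,c})^{-1}\leq 1$, we have $t\leq \sqrt{t}$, so replacing $t$ by $\sqrt{t}$ only weakens the bound. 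Factoring out $\mu_c = \sqrt{e}\max\{L_{1,c},L_{2,c}(1+L_{1,c}\sqrt{e})\}$ from the two summands finishes the proof.

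I expect the main obstacle to be bookkeeping rather than a conceptual hurdle: making sure the Gr\"onwall argument is applied on an interval where $\tilde{x}(s)\in\mathcal{X}_c$ (so that Assumptions~\ref{as_cont_pro1} and \ref{as_cont_pro2} are applicable throughout), which is guaranteed by restricting to $t\in\triangle_*(\tilde{x}_0,u^*)$, and cleanly tracking how the two constants $L_{1,c}$ and $L_{2,c}$ combine into the single $\mu_c$ via the $t\leq\sqrt{t}\leq 1$ and $L_{1,c}t\leq 1/2$ inequalities. Note that Assumption~\ref{as_cont_pro3} is \emph{not} needed for this corollary; it is used later to push the bound into a state-dependent form via $M_c$.
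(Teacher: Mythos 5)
Your proof is correct. Note, for calibration, that the paper does not actually prove this corollary: it is imported from \cite{proskurnikov2018lyapunov} (via \cite{hertneck2019nonlinear}) and the text only points to that reference, so there is no in-paper argument to compare against; your reconstruction is the standard route and all the steps check out. In particular, the splitting into $\left(\partder{\tilde{x}(t)}-\partder{\tilde{x}_0}\right)f(\tilde{x}(t),u^*)$ plus $\partder{\tilde{x}_0}\left(f(\tilde{x}(t),u^*)-f(\tilde{x}_0,u^*)\right)$, the Gr\"onwall estimate $\norm{\tilde{x}(t)-\tilde{x}_0}\leq t\norm{f(\tilde{x}_0,u^*)}e^{L_{1,c}t}$, and the consequences of $t\leq(1+2L_{1,c})^{-1}$ (namely $L_{1,c}t\leq 1/2$, hence $e^{L_{1,c}t}\leq\sqrt{e}$, and $t\leq 1$, hence $1+L_{1,c}\sqrt{e}\,t\leq 1+L_{1,c}\sqrt{e}$ and $t\leq\sqrt{t}$) reproduce $\mu_c=\sqrt{e}\max\{L_{1,c},L_{2,c}(1+L_{1,c}\sqrt{e})\}$ exactly. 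You are also right that Assumption~\ref{as_cont_pro3} is not used here, that restricting to $\triangle_*(\tilde{x}_0,u^*)$ is what keeps $\tilde{x}(s)\in\mathcal{X}_c$ so the Lipschitz constants apply along the whole segment, and — worth making explicit — that the uniformity of $L_{1,c}$ over the third argument in Assumption~\ref{as_cont_pro1} is what allows the frozen input $u^*=\kappa(\tilde{x}_1)$ with $\tilde{x}_1\neq\tilde{x}_0$. One side remark: your chain of estimates actually delivers the stronger bound with $t$ in place of $\sqrt{t}$; the $\sqrt{t}$ form is a deliberate weakening that matches the statement as inherited from \cite{proskurnikov2018lyapunov} and the $h^{3/2}$ terms it feeds into in Algorithm~\ref{algo_cont_dyn_trig} and \eqref{eq_h_max}.
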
 	
Based on the upper bound on the time derivative of $V_\CLF(\tilde{x}(t))$ from Corollary~\ref{coro_cont_lyap}, there are different approaches to determine an upper bound on the $\sigma$-MASP. A bound that takes directly into account the time derivative of $V_\CLF(\tilde{x}(t))$ and that guarantees a certain amount of decrease of $V_\CLF(\tilde{x}(t))$ at any time has been presented in \cite{proskurnikov2018lyapunov}. The averaged convergence criterion \eqref{eq_cont_v_smaller} allows an increase of the Lyapunov function for some times if an average decrease is still guaranteed. Thus, we use in this paper instead the bound on the $\sigma$-MASP from \cite{hertneck2019nonlinear}, that is less conservative for the considered setup. This bound is given for $x_0 \in \mathcal{X}_c$ by 
	\begin{equation}
	\label{eq_h_max}
	h_{\sigma\text{\normalfont -MASP}} = \min \left\lbrace \left(\frac{3 (1-\sigma)}{2 \mu_c M_{\max,c}}\right)^2, (1+2L_{1,c})^{-1}\right\rbrace,
	\end{equation}
	where 	$M_{\max,c} = \underset{x \in \mathcal{X}_c}{\sup} M_c(x)$
	and $\sigma \in \left(0,1\right)$.
	If $h\leq h_{\sigma\text{\normalfont -MASP}}$, then we know from \cite[Remark~1]{hertneck2019nonlinear} that asymptotic stability and satisfaction of the convergence criterion \eqref{eq_cont_v_smaller} are guaranteed for periodic sampling  with sampling period $h$. 
	\remove{We will exploit this later for the PETC design.}

\subsection{Non-Monotonic Stability for the \modelbasedPETC Setup} 
Now, we present a sufficient stability condition for the DDS~\eqref{eq_cont_sys_comp} that is closely related to Theorem~6.4.2 from \cite{michel2015stability}. The main difference to Theorem~6.4.2 is that our DDS model~\eqref{eq_cont_sys_comp} has two different jump equations for $t\in \mathcal{T}^a$ and $t \in \mathcal{T}^s \backslash \mathcal{T}^a$. However, a decrease of the Lyapunov function can only be guaranteed along successive time instants from $\mathcal{T}^a$. A modified version of Theorem 6.4.2 that takes into account the additional jumps of the DDS~\eqref{eq_cont_sys_comp} can nevertheless be stated for our setup as follows.
\begin{prop}
	\label{prob_michel}
	Observe the DDS given by \eqref{eq_cont_sys_comp}. Assume that the unbounded discrete subset $\mathcal{T}^a$ of $\mathbb{R}_{\geq 0}$ satisfies 
	\begin{equation}
	\label{eq_non_mon_dec3}
	0 < \underline{\eta} \leq \tau^a_{l+1} - \tau^a_{l} \leq \overline{\eta},~ \forall l\in\mathbb{N}_0
	\end{equation}
	and $\tau^a_0 = 0$. Furthermore, assume there is a continuous positive definite function $V:\mathbb{R}^{2n_x} \rightarrow \mathbb{R},$ satisfying 
		\begin{equation}
		\label{eq_non_mon_dec4}
		\alpha_3 {(\norm{\xi})} \leq V(\xi) \leq \alpha_4 {(\norm{\xi})}, 	
		\end{equation}
	 such that for all $l\in \mathbb{N}_0,$ and all $\xi(\tau^a_l) \in \mathcal{X}_{c,2} $, where $\mathcal{X}_{c,2} := \left\lbrace \xi|V(\xi) \leq c \right\rbrace$ for the chosen $c \in \mathbb{R}_{>0}$,
	\begin{equation}
	\label{eq_non_mon_desc1}
	V(\xi(\tau^a_l+r)) \leq \alpha_5(V(\xi(\tau^a_l))), ~ 0 \leq r < \tau^a_{l+1} -\tau^a_l
	\end{equation}
	and
	\begin{equation}
	\label{eq_non_mon_desc2}
	\frac{1}{\tau^a_{l+1} - \tau^a_l} \left[ V(\xi(\tau^a_{l+1})) - V(\xi(\tau^a_l))\right] \leq -\gamma_2 (V(\xi(\tau^a_l)))
	\end{equation}
	hold with class $\mathcal{K}$ functions $\alpha_3, \alpha_4, \alpha_5, \gamma_2$. Then the equilibrium $\xi = 0$ is asymptotically stable for the DDS~\eqref{eq_cont_sys_comp}  with region of attraction $\mathcal{X}_{c,2}$. 
\end{prop}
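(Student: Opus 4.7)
The plan is to combine a discrete-time decrease of $V$ along the transmission sequence $(\tau^a_l)_{l\in\mathbb{N}_0}$ with the inter-transmission bound \eqref{eq_non_mon_desc1} to lift stability and attractivity from the sampled sequence to the full continuous trajectory of the DDS \eqref{eq_cont_sys_comp}.

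First I would set $V_l := V(\xi(\tau^a_l))$ and argue by induction that $\xi(\tau^a_l) \in \mathcal{X}_{c,2}$ for every $l \in \mathbb{N}_0$, so that \eqref{eq_non_mon_desc1} and \eqref{eq_non_mon_desc2} may legitimately be invoked at every transmission time. The base case holds since $\xi_0 \in \mathcal{X}_{c,2}$ by hypothesis on the region of attraction. For the induction step, combining \eqref{eq_non_mon_desc2} with the dwell-time lower bound in \eqref{eq_non_mon_dec3} yields
\[
V_{l+1} \leq V_l - (\tau^a_{l+1} - \tau^a_l)\,\gamma_2(V_l) \leq V_l - \underline{\eta}\,\gamma_2(V_l) \leq V_l,
\]
so $V_{l+1} \leq V_l \leq c$ and $\xi(\tau^a_{l+1}) \in \mathcal{X}_{c,2}$. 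Thus $(V_l)$ is non-increasing and non-negative, hence convergent to some $V^\ast \geq 0$; supposing $V^\ast > 0$ would force $\gamma_2(V_l) \geq \gamma_2(V^\ast) > 0$ for all $l$, giving $V_l \leq V_0 - l \underline{\eta} \gamma_2(V^\ast) \to -\infty$, a contradiction. Therefore $V_l \to 0$.

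Next I would transfer this convergence to the continuous trajectory using \eqref{eq_non_mon_desc1}. For any $t \geq 0$ there is a unique $l$ with $\tau^a_l \leq t < \tau^a_{l+1}$, and \eqref{eq_non_mon_desc1} together with the monotonicity of $V_l$ and \eqref{eq_non_mon_dec4} yields
\[
\alpha_3(\norm{\xi(t)}) \leq V(\xi(t)) \leq \alpha_5(V_l) \leq \alpha_5(V_0) \leq \alpha_5(\alpha_4(\norm{\xi_0})).
\]
Since the composition $\alpha_3^{-1} \circ \alpha_5 \circ \alpha_4$ is again of class $\mathcal{K}$, this provides the $\varepsilon$-$\delta$ estimate required for Lyapunov stability of $\xi = 0$ (choose $\delta$ small enough so that the initial condition lies in $\mathcal{X}_{c,2}$ and so that the composed bound is below $\varepsilon$). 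Attractivity follows from $V_l \to 0$, continuity of $\alpha_5$ at zero with $\alpha_5(0)=0$, and the lower bound above: $\alpha_3(\norm{\xi(t)}) \leq \alpha_5(V_l) \to 0$ as $t \to \infty$, so $\xi(t) \to 0$; the region of attraction is $\mathcal{X}_{c,2}$ by the propagated invariance shown above.

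The main obstacle I expect is not any individual computation, which merely reproduces standard Lyapunov arguments, but rather bookkeeping the two coexisting jump laws of \eqref{eq_cont_sys_comp}. The bound \eqref{eq_non_mon_desc1} must absorb both the continuous evolution of $\xi_1$ and the additional jumps of $\xi_2$ at the intermediate times in $\mathcal{T}^s \setminus \mathcal{T}^a$, so I must be careful to work exclusively with the sub-sequence $(\tau^a_l)$ rather than the full sample grid $\mathcal{T}^s$, and the dwell-time lower bound $\underline{\eta}$ is essential to turn the average decrease \eqref{eq_non_mon_desc2} into an actual strict contraction of $V_l$ from which convergence can be extracted.
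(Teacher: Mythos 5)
Your proof is correct and follows essentially the same route as the paper: a monotone decrease of $V(\xi(\tau^a_l))$ obtained from \eqref{eq_non_mon_desc2} together with the dwell-time bounds \eqref{eq_non_mon_dec3}, lifted to the continuous trajectory via \eqref{eq_non_mon_desc1} and the class-$\mathcal{K}$ sandwich \eqref{eq_non_mon_dec4}. The only cosmetic difference is that the paper sums \eqref{eq_non_mon_desc2} to obtain the explicit decay estimate $V(\xi(\tau^a_l)) \leq \gamma_2^{-1}\bigl(V(\xi_0)/(l\underline{\eta})\bigr)$, whereas you conclude $V(\xi(\tau^a_l)) \to 0$ by a monotone-sequence-plus-contradiction argument; both are valid.
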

\begin{proof}
	The proof is given in Appendix~\ref{append_a}.
\end{proof}
Based on  Assumption~\ref{asum_cont_clf} and our setup, we can simplify the conditions of Proposition~\ref{prob_michel} as follows.
\begin{prop}
	\label{prop_help_eq}
	Let Assumption~\ref{asum_cont_clf} hold. Then, \eqref{eq_non_mon_dec4}, \eqref{eq_non_mon_desc1} and \eqref{eq_non_mon_desc2} hold for the DDS~\eqref{eq_cont_sys_comp} and $V(\xi) = \frac{1}{2} \left(V_\CLF(\xi_1) + V_\CLF(\xi_2)\right)$, if \eqref{eq_non_mon_dec3} holds, and for all $x(\tau^a_l) \in \mathcal{X}_c$ and all $l \in \mathbb{N}_0$
	\begin{equation}
	\label{eq_v_desc}
	V_\CLF(x(\tau^a_l+\tone)) \leq V_\CLF(x(\tau^a_l)) 
	\end{equation} 
	holds for $0\leq \tone < \tau^a_{l+1}-\tau^a_l$,  
	 and
	\begin{align}
	\frac{1}{\tau^a_{l+1} - \tau^a_l} \left[ V_\CLF(x(\tau^a_{l+1})) - V_\CLF(x(\tau^a_l))\right]
	\leq -\gamma_2 (V_\CLF(x(\tau^a_l))) \label{eq_v_desc_2}
	\end{align}
	holds.
	Furthermore, $x(\tau^a_l) \in\mathcal{X}_c$ implies $\xi(\tau^a_l) \in\mathcal{X}_{c,2}$.
\end{prop}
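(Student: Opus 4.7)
The plan is to leverage the reset identity provided by the DDS. At every transmission time $\tau^a_l \in \mathcal{T}^a$, the third branch of \eqref{eq_cont_sys_comp} enforces $\hat{x}(\tau^a_l) = x(\tau^a_l)$, so $\xi_1(\tau^a_l) = \xi_2(\tau^a_l)$ and $V(\xi(\tau^a_l)) = V_\CLF(x(\tau^a_l))$. This observation immediately gives the final claim of the proposition, since then $x(\tau^a_l) \in \mathcal{X}_c$ is equivalent to $V(\xi(\tau^a_l)) \leq c$, i.e.\ $\xi(\tau^a_l) \in \mathcal{X}_{c,2}$. Applying the same identity at both endpoints of the inter-event interval, \eqref{eq_non_mon_desc2} is inherited directly from \eqref{eq_v_desc_2} with the very same class $\mathcal{K}$ function $\gamma_2$.

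For the sandwich bound \eqref{eq_non_mon_dec4}, I would apply Assumption~\ref{asum_cont_clf} component-wise. The upper bound uses $V_\CLF(\xi_i) \leq \alpha_2(\norm{\xi_i}) \leq \alpha_2(\norm{\xi})$, whence $\alpha_4 := \alpha_2$. For the lower bound I combine $V(\xi) \geq \tfrac{1}{2}\alpha_1(\max(\norm{\xi_1},\norm{\xi_2}))$ with the elementary inequality $\max(\norm{\xi_1},\norm{\xi_2}) \geq \norm{\xi}/\sqrt{2}$ and the monotonicity of $\alpha_1$, which yields the class $\mathcal{K}$ function $\alpha_3(s) := \tfrac{1}{2}\alpha_1(s/\sqrt{2})$.

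The delicate step is \eqref{eq_non_mon_desc1}. The plant component is already controlled by \eqref{eq_v_desc} via $V_\CLF(x(\tau^a_l+r)) \leq V_\CLF(x(\tau^a_l))$, so the task reduces to bounding $V_\CLF(\hat{x}(\tau^a_l+r))$ by a class $\mathcal{K}$ function of $V_\CLF(x(\tau^a_l))$. Since $\hat{x}$ is reset to $x(\tau^a_l)$ at $\tau^a_l$ and is updated only at sampling times according to \eqref{eq_cont_est}, one has $\hat{x}(\tau^a_l+r) = \conc{f_p}{j}(x(\tau^a_l))$ for some integer $0 \leq j \leq \lfloor \overline{\eta}/h \rfloor =: j_{\max}$, where the upper bound $\overline{\eta}$ from \eqref{eq_non_mon_dec3} together with the positive sampling period $h$ caps the number of prediction steps. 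Because $f_p$ is $C^1$ with $f_p(0)=0$, the mean value theorem furnishes a Lipschitz constant on each closed ball; iterating this one-step bound inductively on $j$ and taking the maximum over the finitely many values $j \in \{0,\dots,j_{\max}\}$ would produce a class $\mathcal{K}$ majorant $\beta$ with $\norm{\conc{f_p}{j}(x)} \leq \beta(\norm{x})$ for every $x \in \mathcal{X}_c$. Composing with Assumption~\ref{asum_cont_clf} then gives $V_\CLF(\hat{x}(\tau^a_l+r)) \leq \alpha_2(\beta(\alpha_1^{-1}(V_\CLF(x(\tau^a_l)))))$, and $\alpha_5(s) := \tfrac{1}{2}\bigl(s + \alpha_2(\beta(\alpha_1^{-1}(s)))\bigr)$ closes the argument.

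The principal obstacle is precisely this construction of $\alpha_5$: because $f_p$ is essentially unconstrained beyond $C^1$-regularity and $f_p(0)=0$, no a priori global growth bound is available, so the argument must simultaneously exploit the finiteness of $j_{\max}$ (guaranteed by $h>0$ and by \eqref{eq_non_mon_dec3}) and the boundedness of the sublevel set $\mathcal{X}_c$. Once those ingredients are in place the iteration is routine; the only care required is in arranging that each one-step bound be strictly increasing and vanish at zero, so that the composite $\beta$, and hence $\alpha_5$, indeed lies in $\mathcal{K}$.
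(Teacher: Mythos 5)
Your proposal is correct and follows essentially the same route as the paper's proof: you reset-identify $\xi_1(\tau^a_l)=\xi_2(\tau^a_l)$ to dispatch \eqref{eq_non_mon_desc2} and the level-set claim, and you construct $\alpha_5$ by writing $\hat{x}(\tau^a_l+r)=\conc{f_p}{j}(x(\tau^a_l))$ and iterating a local Lipschitz bound for $f_p$ over the finitely many steps $j\leq j_{\max}$ bounded via $\overline{\eta}$ and $h>0$, exactly the paper's recursive construction of the sets $\mathcal{C}_v$ and the majorant $\alpha_5(\cdot)=\max\{(\cdot),\alpha_2((L_{\mathcal{C}_{j_{\max}}})^{j_{\max}}\alpha_1^{-1}(\cdot))\}$. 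The only differences are cosmetic: you spell out explicit $\alpha_3,\alpha_4$ (which the paper omits) and use an averaged rather than a max-type form for $\alpha_5$.
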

\begin{proof}
	The proof is given in Appendix~\ref{append_c}.
\end{proof}
Proposition~\ref{prop_help_eq} allows us to state a sufficient stability condition in terms of the decrease conditions \eqref{eq_v_desc} and \eqref{eq_v_desc_2} for the system state $x$. This will turn out to be useful to guarantee stability for the \modelbasedPETC mechanism. %
Note that the choice of $h$ according to the $\sigma$-MASP bound from  \eqref{eq_h_max} guarantees satisfaction of the decrease conditions \eqref{eq_v_desc} and \eqref{eq_v_desc_2} if a transmission is triggered periodically at each sampling time. This is formalized in \cite{hertneck2019nonlinear}  by the following Lemma, which we will use in the stability proof for the PETC mechanism.
\begin{lemma}[cf. Lemma~1 from \cite{hertneck2019nonlinear}]
	\label{lem_cont_stab2}
	Let Assumptions~\ref{asum_cont_clf}-\ref{as_cont_pro3} hold for the chosen $c\in\mathbb{R}_{>0}$. Assume  system \eqref{eq_sys_cont} is used with controller \eqref{eq_cont_fb}, $\hat{x}(\tau^a_l) = x(\tau^a_l)$, and with $x(\tau^a_l)\in\mathcal{X}_c$ for some $\tau^a_{l}\in\mathcal{T}^a$. 
	If the next successful transmission takes place at a time $\tau^a_l+h$, i.e., $\tau^a_{l+1} = \tau^a_l+h$ and $ h  \leq h_{\sigma\text{\normalfont -MASP}}$ with $h_{\sigma\text{\normalfont -MASP}}$ according to \eqref{eq_h_max}, then
	\eqref{eq_v_desc} and \eqref{eq_v_desc_2} hold for $V_\CLF$  on $\mathcal{X}_{c}$
	with $\gamma_2 = \sigma  \gamma $. Moreover, if $V_\CLF(x(\tau^a_l)) \leq S(\tau^a_l,x_0)$, then \eqref{eq_cont_v_smaller} holds for $\tau^a_l \leq t \leq \tau^a_{l+1}$ and $V_\CLF(x({\tau^a_{l+1}})) \leq S(\tau^a_{l+1},x_0)$.
\end{lemma}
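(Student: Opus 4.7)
The plan is to reduce the lemma to the constant-input case and then invoke Corollary~\ref{coro_cont_lyap} to produce a time-dependent bound on $V_\CLF(x(\cdot))$ along the closed-loop trajectory.

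First I would observe that since $\tau^a_{l+1} = \tau^a_l + h$ is by assumption the next successful transmission \emph{and} the next sampling time, no intermediate update of $\hat{x}$ via $f_p$ occurs on $[\tau^a_l,\tau^a_l+h)$. Combined with $\hat{x}(\tau^a_l) = x(\tau^a_l)$ and the fact that $\hat{x}$ is constant between sampling times, this gives $u(t) = \kappa(x(\tau^a_l)) =: u^*$ for all $t\in[\tau^a_l,\tau^a_l+h)$. Hence the trajectory on this interval coincides with the Cauchy problem studied in Corollary~\ref{coro_cont_lyap} with initial state $\tilde{x}_0 = x(\tau^a_l) \in \mathcal{X}_c$ and frozen input $u^* = \kappa(\tilde{x}_0)$. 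Note that $h \leq h_{\sigma\text{\normalfont -MASP}} \leq (1+2L_{1,c})^{-1}$, so Corollary~\ref{coro_cont_lyap} applies on $[0,h]$ as long as the trajectory remains in $\mathcal{X}_c$; this last point would be handled via a standard continuity argument based on the descent estimate derived below.

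Next, I would combine Corollary~\ref{coro_cont_lyap}, Assumption~\ref{as_cont_pro3}, and Assumption~\ref{asum_cont_clf} as follows. Corollary~\ref{coro_cont_lyap} gives $\timeder{\tilde{x}(t)}{u^*} \leq \timeder{\tilde{x}_0}{u^*} + \sqrt{t}\,\mu_c\bigl(\norm{V_\CLF^{'}(\tilde{x}_0)}\norm{f(\tilde{x}_0,u^*)} + \norm{f(\tilde{x}_0,u^*)}^2\bigr)$. Assumption~\ref{as_cont_pro3} with $u^* = \kappa(\tilde{x}_0)$ bounds the bracket by $M_c(\tilde{x}_0)\,\abs{\timeder{\tilde{x}_0}{u^*}} \leq M_{\max,c}\,\gamma(V_\CLF(\tilde{x}_0))$, where I use Assumption~\ref{asum_cont_clf} to replace $\abs{\timeder{\tilde{x}_0}{u^*}} = -\timeder{\tilde{x}_0}{u^*}$ by $\gamma(V_\CLF(\tilde{x}_0))$. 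Another application of Assumption~\ref{asum_cont_clf} to the first term yields
\begin{equation*}
\timeder{\tilde{x}(t)}{u^*} \leq -\bigl(1 - \sqrt{t}\,\mu_c M_{\max,c}\bigr)\gamma(V_\CLF(\tilde{x}_0)).
\end{equation*}

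Integrating this inequality from $0$ to $\tone \in [0,h]$ gives $V_\CLF(x(\tau^a_l+\tone)) - V_\CLF(x(\tau^a_l)) \leq -\tone\bigl(1 - \tfrac{2}{3}\sqrt{\tone}\,\mu_c M_{\max,c}\bigr)\gamma(V_\CLF(x(\tau^a_l)))$. Substituting the definition of $h_{\sigma\text{\normalfont -MASP}}$ from \eqref{eq_h_max} shows that for $\tone \leq h \leq h_{\sigma\text{\normalfont -MASP}}$, the factor in parentheses is at least $\sigma$, so
\begin{equation*}
V_\CLF(x(\tau^a_l+\tone)) \leq V_\CLF(x(\tau^a_l)) - \tone\,\sigma\,\gamma(V_\CLF(x(\tau^a_l))).
\end{equation*}
The right-hand side is nonincreasing in $\tone$ and equals $V_\CLF(x(\tau^a_l))$ at $\tone = 0$, which immediately yields \eqref{eq_v_desc}; evaluating at $\tone = h = \tau^a_{l+1}-\tau^a_l$ and dividing by $h$ yields \eqref{eq_v_desc_2} with $\gamma_2 = \sigma\gamma$. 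Finally, the convergence claim follows by applying Proposition~\ref{prop_cont_conv} with $C_2 = V_\CLF(x(\tau^a_l))$, $C_1 = V_\CLF(x(\tau^a_l+\tone))$, $s = \tau^a_l$, and $\tone \in [0,h]$.

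The main obstacle is essentially bookkeeping: carefully verifying that $x(t)$ does not leave $\mathcal{X}_c$ on $[\tau^a_l,\tau^a_l+h]$, which is needed to justify invoking Assumptions~\ref{as_cont_pro1}--\ref{as_cont_pro3} uniformly along the trajectory. This is resolved by noting that the inequality $V_\CLF(x(\tau^a_l+\tone)) \leq V_\CLF(x(\tau^a_l)) \leq c$ holds for as long as the trajectory remains in $\mathcal{X}_c$, which by continuity extends over the entire sampling interval.
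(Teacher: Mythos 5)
Your route is the intended one: the paper itself gives no proof of Lemma~\ref{lem_cont_stab2} (it only points to \cite{hertneck2019nonlinear}), but the identical chain — frozen input $u^*=\kappa(x(\tau^a_l))$ over one sampling period, Corollary~\ref{coro_cont_lyap}, Assumption~\ref{as_cont_pro3}, time integration, then Proposition~\ref{prop_cont_conv} — reappears almost verbatim inside the paper's proof of Theorem~\ref{theo_cont_stab} via the auxiliary function $V_\boundd$, and your treatment of the invariance of $\mathcal{X}_c$ by continuity is also the standard one.

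One intermediate step is stated in the wrong order, although your final display is correct. You replace $\abs{\timeder{\tilde{x}_0}{u^*}}$ by $\gamma(V_\CLF(\tilde{x}_0))$ \emph{before} integrating, writing $\timeder{\tilde{x}(t)}{u^*} \leq -\bigl(1-\sqrt{t}\,\mu_c M_{\max,c}\bigr)\gamma(V_\CLF(\tilde{x}_0))$. Since $\abs{\timeder{\tilde{x}_0}{u^*}} \geq \gamma(V_\CLF(\tilde{x}_0))$ is a \emph{lower} bound supplied by Assumption~\ref{asum_cont_clf}, this substitution is only legitimate while the coefficient $1-\sqrt{t}\,\mu_c M_{\max,c}$ is nonnegative; but \eqref{eq_h_max} only guarantees $\sqrt{t}\,\mu_c M_{\max,c} \leq \tfrac{3}{2}(1-\sigma)$, which exceeds $1$ when $\sigma<1/3$ and $t$ is near the upper end of the admissible range, so the pointwise inequality you assert does not follow there. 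The repair is immediate: keep $D:=\abs{\timeder{\tilde{x}_0}{u^*}}$ through the integration to obtain
\begin{equation*}
V_\CLF(x(\tau^a_l+\tone))-V_\CLF(x(\tau^a_l)) \leq -\tone\Bigl(1-\tfrac{2}{3}\sqrt{\tone}\,\mu_c M_{\max,c}\Bigr)D,
\end{equation*}
note that the \emph{integrated} coefficient satisfies $1-\tfrac{2}{3}\sqrt{\tone}\,\mu_c M_{\max,c}\geq 1-(1-\sigma)=\sigma>0$ by \eqref{eq_h_max}, and only then invoke $D\geq\gamma(V_\CLF(x(\tau^a_l)))$. Everything downstream — \eqref{eq_v_desc}, \eqref{eq_v_desc_2} with $\gamma_2=\sigma\gamma$, and the application of Proposition~\ref{prop_cont_conv} with $C_2=V_\CLF(x(\tau^a_l))\leq S(\tau^a_l,x_0)$ — then goes through exactly as you wrote it.
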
 
\begin{proof}
	\remove{Follows from  the proof of Lemma~1 from } See \cite{hertneck2019nonlinear}.
\end{proof}
\begin{rema}
	Instead of Lemma~\ref{lem_cont_stab2}, different methods to determine a bound on the ($\sigma$-)MASP can be used in our setup. For example, the emulation approach from \cite{nesic2009explicit} can be used if a hybrid Lyapunov function is found for the considered system \eqref{eq_sys_cont}. Using this approach may for some setups lead to a larger bound on the ($\sigma$-)MASP, than the one we consider here, but requires some additional assumptions. 
\end{rema}

\section{Model-Based  PETC}
\label{sec_main}
\remove{In this section, we present the trigger mechanism for MB-PETC. %
We show that stability can be guaranteed if that mechanism is used, independently of the choice of $f_p$. 
Moreover, we demonstrate the efficiency of the proposed mechanism with a numerical example using an erroneous Euler forward discretization for prediction.}

In this section, we present the trigger mechanism for MB-PETC, provide stability guarantees that are independently of the choice of $f_p$ and demonstrate the efficiency of the proposed mechanism with a numerical example.

\subsection{Model-Based Triggering Mechanism}

\begin{table}[tb]
	\begin{algorithm}[H]
		\caption{MB-PETC triggering mechanism at $t = \tau_k^s$ for $k\in\mathbb{N}_0$.}
		\label{algo_cont_dyn_trig}
		\begin{algorithmic}[1]
			\IF{$k = 0$}
			\STATE $i^\text{\normalfont ref} \leftarrow 0$,  $\hat{x}^\text{\normalfont sens} \leftarrow x_0$,
			\STATE $V^\text{\normalfont ref} \leftarrow V_{\CLF}(x_0)$ with $V_{\CLF}$ according to Assumption~\ref{asum_cont_clf}
			\STATE send $x_0$ over the network
			\ELSE
			\STATE $\hat{x}^\text{\normalfont sens} \leftarrow f_p(\hat{x}^\text{\normalfont sens})$,
			\STATE $u^\text{\normalfont sens} \leftarrow \kappa(\hat{x}^\text{\normalfont sens})$
			\STATE $\lambda_k \leftarrow  h  \timeder{x(kh)}{u^\text{\normalfont sens}} +\frac{2}{3} h^{3/2} \mu_c $
			
			$\left( \norm{\partder{x(kh)}} \norm{f(x(kh),u^\text{\normalfont sens})} + \norm{f(x(kh),u^\text{\normalfont sens})}^2\right) $
			\IF {$k-i^\text{\normalfont ref} > \nu$ \textbf{or} $V_\CLF(\hat{x}^\text{\normalfont sens})>c$ \textbf{or}
				
				$V_\CLF(x(kh))+\lambda_k \geq V^\text{\normalfont ref} - (k-i^\text{\normalfont ref}+1)  h  \sigma\gamma \left( V^\text{\normalfont ref}\right)$} \label{line_trigger_cond} \label{line_trigger}
			\STATE send $x(kh)$ over the network 
			\STATE $i^\text{\normalfont ref} \leftarrow k$, $V^\text{\normalfont ref} \leftarrow V_{\CLF}(x(kh))$, $\hat{x}^\text{\normalfont sens} \leftarrow x(kh)$
			\ELSE
			\STATE no transmission of $x(kh)$ necessary
			\ENDIF

			\ENDIF
		\end{algorithmic}
	\end{algorithm}	\
	\vspace{-1.5 cm}
\end{table}

The \modelbasedPETC trigger mechanism is given by Algorithm~\ref{algo_cont_dyn_trig}. As in \cite{hertneck2019nonlinear}, an upper bound on the time evolution of the Lyapunov function is computed at each sampling time based on the bound from Corollary~\ref{coro_cont_lyap} and the current system state, and a transmission is triggered if this bound violates the convergence criterion. 

Due to the prediction at the actuator, the input changes at sampling times,
 such that the methods from \cite{hertneck2019nonlinear} cannot directly be used for the considered setup. Instead, the key idea here is, to use a copy of the prediction at the actuator also for the  PETC mechanism. On that way, it is possible 
to include knowledge about the input that will be applied to the system for the next sampling period in the triggering decision, and thus to bound the time evolution of the Lyapunov function for the next sampling period.

For technical reasons a transmission is  also triggered if the time between two transmissions exceeds the arbitrary large, but fixed bound $\nu h$, or if the predicted state $\hat{x}$ leaves the considered level set $\mathcal{X}_c$. We can state the following Theorem.
\begin{theo}
	\label{theo_cont_stab}
	Let Assumptions~\ref{asum_cont_clf}-\ref{as_cont_pro3} hold on $\mathcal{X}_c$ for the chosen $c \in \mathbb{R}_{>0}$. Assume  system \eqref{eq_sys_cont} is used with controller \eqref{eq_cont_fb}, prediction \eqref{eq_cont_est} and with $x_0\in \mathcal{X}_c$.  Let necessary transmissions be detected with the trigger mechanism specified by Algorithm~\ref{algo_cont_dyn_trig} that is evaluated periodically with a sampling period $ h \leq h_{\sigma\text{-\normalfont MASP}} $ with $h_{\sigma\text{-\normalfont MASP}}$ according to \eqref{eq_h_max}, $\sigma \in  \left(0, 1\right)$ and arbitrary large $\nu \in \mathbb{N}$. Then, the origin of the DDS~\eqref{eq_cont_sys_comp} is locally asymptotically stable with region of attraction $x_0\in\mathcal{X}_{c}$ and the convergence criterion \eqref{eq_cont_v_smaller} is satisfied. 
\end{theo}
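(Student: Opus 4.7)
The plan is to verify the hypotheses of Proposition~\ref{prob_michel} via the simplification of Proposition~\ref{prop_help_eq}, which reduces the stability claim to three items at transmission times $\tau^a_l$: the uniform bound \eqref{eq_non_mon_dec3} on $\tau^a_{l+1}-\tau^a_l$, the strict descent \eqref{eq_v_desc_2} with some class $\mathcal{K}$ function $\gamma_2$, and the non-increase \eqref{eq_v_desc} between transmissions, all phrased in terms of $V_\CLF(x(\cdot))$. The convergence criterion \eqref{eq_cont_v_smaller} will be established in parallel by applying Proposition~\ref{prop_cont_conv} to the same sample-wise bound.

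The crux is bounding $V_\CLF(x((k+1)h))-V_\CLF(x(kh))$ over one sampling period. Since the DDS~\eqref{eq_cont_sys_comp} holds $\hat{x}$ constant on $[\tau^s_k,\tau^s_{k+1})$, the applied input $u^\text{\normalfont sens}=\kappa(\hat{x}^\text{\normalfont sens})$ is constant there. Invoking Corollary~\ref{coro_cont_lyap} with $\tilde{x}_0=x(kh)$ and $u^*=u^\text{\normalfont sens}$ and integrating from $0$ to $h$ (legal because $h\leq(1+2L_{1,c})^{-1}$ is enforced by \eqref{eq_h_max}) yields exactly the quantity $\lambda_k$ of Algorithm~\ref{algo_cont_dyn_trig} as an upper bound on $V_\CLF(x((k+1)h))-V_\CLF(x(kh))$, valid whenever $x(t)\in\mathcal{X}_c$ on the interval. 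Hence, whenever the main branch of the trigger test is \emph{not} activated, one obtains the sample-wise bound
\begin{equation*}
V_\CLF(x((k+1)h))\leq V^\text{\normalfont ref}-(k-i^\text{\normalfont ref}+1)\,h\,\sigma\gamma(V^\text{\normalfont ref}).
\end{equation*}

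Next, I would induct over transmission intervals: fix $l$ and set $i^\text{\normalfont ref}=\tau^a_l/h$, $V^\text{\normalfont ref}=V_\CLF(x(\tau^a_l))$. Iterating the sample-wise bound across all consecutive non-triggering steps in $[\tau^a_l,\tau^a_{l+1})$ — with the degenerate case $\tau^a_{l+1}=\tau^a_l+h$ handled directly by Lemma~\ref{lem_cont_stab2} — yields $V_\CLF(x(\tau^a_{l+1}))\leq V^\text{\normalfont ref}-(\tau^a_{l+1}-\tau^a_l)\sigma\gamma(V^\text{\normalfont ref})$, which is \eqref{eq_v_desc_2} with $\gamma_2=\sigma\gamma$. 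The non-increase condition \eqref{eq_v_desc} at intermediate times then follows by reapplying Corollary~\ref{coro_cont_lyap} on each sub-interval and combining with the sample-wise descent. The $\nu h$-timeout branch of the trigger gives $\overline{\eta}\leq\nu h$, and $\underline{\eta}\geq h$ is automatic, so \eqref{eq_non_mon_dec3} is verified; Propositions~\ref{prop_help_eq} and~\ref{prob_michel} then deliver asymptotic stability with $\mathcal{X}_c$ as region of attraction. The convergence criterion is handled by a parallel induction: with base case $V_\CLF(x_0)=S(0,x_0)$, applying Proposition~\ref{prop_cont_conv} at each sample with $C_2=V^\text{\normalfont ref}$ and appropriately shifted $s$ and $r$ propagates $V_\CLF(x(t+h))\leq S(t,x_0)$ onto the whole interval.

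The main obstacle will be a chicken-and-egg issue: Corollary~\ref{coro_cont_lyap} is only valid while $x(t)\in\mathcal{X}_c$, yet invariance of $\mathcal{X}_c$ is itself a consequence of the descent bound. The two safety branches of the trigger condition close this loop — the check $V_\CLF(\hat{x}^\text{\normalfont sens})>c$ forces a retransmission before the prediction can exit $\mathcal{X}_c$ (keeping subsequent $\lambda_k$ computations valid), while the $\nu h$-timeout ensures the analysis window is bounded. Combining these with the sample-wise descent shows inductively that $x(t)\in\mathcal{X}_c$ throughout, closing the argument.
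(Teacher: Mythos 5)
Your plan follows the paper's proof essentially step for step: verification of Proposition~\ref{prob_michel} through Proposition~\ref{prop_help_eq} with $V(\xi)=\tfrac{1}{2}(V_\CLF(\xi_1)+V_\CLF(\xi_2))$, the identification of $\lambda_k$ with the integrated bound from Corollary~\ref{coro_cont_lyap}, Lemma~\ref{lem_cont_stab2} for the degenerate case and for the interval immediately following a transmission, the timeout branch for \eqref{eq_non_mon_dec3}, and Proposition~\ref{prop_cont_conv} with $C_2=V^{\text{\normalfont ref}}$ for the convergence criterion. All the main ingredients are the right ones, and you correctly identify both the role of the $V_\CLF(\hat{x}^{\text{\normalfont sens}})>c$ branch and the circularity between applicability of Corollary~\ref{coro_cont_lyap} and invariance of $\mathcal{X}_c$.

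However, the step you offer to close that circle is under-specified in a way that would fail if executed literally. Corollary~\ref{coro_cont_lyap} plus time integration gives $V_\CLF(x(\tau^s_k+\ttwo))\leq V_\boundd(x(\tau^s_k),u^{\text{\normalfont sens}},\ttwo)$ for $\ttwo\in[0,h]$, but the non-activated trigger bounds $V_\boundd$ only at $\ttwo=h$. Since the term $\ttwo\,\timeder{x(\tau^s_k)}{u^{\text{\normalfont sens}}}$ may be negative while the $\ttwo^{3/2}$ term is positive, an interior value of $V_\boundd$ is not dominated by either endpoint value by any monotonicity argument, so "reapplying Corollary~\ref{coro_cont_lyap} on each sub-interval and combining with the sample-wise descent" does not by itself yield \eqref{eq_v_desc} at non-sampling times, nor does it show that the true state $x(t)$ (as opposed to $\hat{x}$, which the safety branch controls) stays in $\mathcal{X}_c$ so that $t^*\geq h$ and the Corollary is applicable on the next sub-interval. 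The missing observation, which is the one genuinely non-bookkeeping step of the paper's proof, is that $V_\boundd(x(\tau^s_k),u^{\text{\normalfont sens}},\ttwo)$ has positive second derivative in $\ttwo$ on $(0,h]$, hence attains its maximum over $[0,h]$ at $\ttwo=0$ or $\ttwo=h$; this gives $V_\CLF(x(\tau^s_k+\ttwo))\leq\max\left\lbrace V_\CLF(x(\tau^s_k)),V_\boundd(x(\tau^s_k),u^{\text{\normalfont sens}},h)\right\rbrace<c$, which simultaneously establishes \eqref{eq_v_desc} at intermediate times, preserves $\mathcal{X}_c$-invariance of $x(t)$, and extends \eqref{eq_cont_v_smaller} from sampling times to all $t$. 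With that single addition your plan coincides with the paper's proof.
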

\begin{proof}
	The proof is given in Appendix~\ref{append_b}.
\end{proof}
\begin{rema}
	A special case of Theorem~\ref{theo_cont_stab} for ZOH feedback is given by Theorem~1 from \cite{hertneck2019nonlinear}, if the maximum number of successive lost packets $m$ in Theorem~1 from \cite{hertneck2019nonlinear} is chosen to be $0$.
\end{rema}
\begin{rema}
	The computational complexity of Algorithm~\ref{algo_cont_dyn_trig} depends mainly on the computational complexity required for evaluating $f_p, \kappa,  \timeder{x(kh)}{u^\text{\normalfont sens}}, \norm{\partder{x(kh)}}$ and $\norm{f(x(kh),u^\text{\normalfont sens})}$  and therefore on the system dynamics \eqref{eq_sys_cont} and on the chosen prediction function and controller.
\end{rema}
\begin{rema}
	The main advantage of the proposed \modelbasedPETC mechanism, i.e., the prediction at the actuator, can also be combined with different PETC mechanisms from literature. For example, in the mechanism from \cite{Wang2016}, a transmission is triggered at time $kh$, if for a trigger function $\Upsilon$, the limit	 $\Upsilon(e(kh^-),x(kh^-))$ with $e(t) = \hat{x}(t) - x(t)$ is larger than $0$. The prediction at the actuator can here be included by considering for the trigger decision the error after the next prediction step is done, i.e., by using $\Upsilon(f_p(\hat{x}((k-1)h)-x(kh^-),x(kh^-))$ for the trigger decision.

\end{rema}

Theorem~\ref{theo_cont_stab} allows us to use a wide class of prediction functions at the actuator. As long as $f_p$ is continuously differentiable and satisfies $f_p(0) = 0$, stability is guaranteed independently of the actual choice of $f_p$. 
We will now show in a numerical example, that even a simple and erroneous prediction $f_p$ at the actuator can significantly improve the performance of the closed-loop system in comparison to the PETC with ZOH (subsequently denoted as ZOH-PETC). 
\subsection{Comparison of the Model-Based PETC to ZOH-PETC}
To compare the \modelbasedPETC mechanism to ZOH-PETC, we use the same inverted pendulum example, that was used in \cite{hertneck2019nonlinear}. For this example the system dynamics are given by \begin{equation}
\begin{pmatrix}
\dot{x}_1(t) \\
\dot{x}_2(t) 
\end{pmatrix} = \begin{pmatrix}
x_2(t)\\
(sin(x_1(t)) - u(t) cos(x_1(t))) \omega_0
\end{pmatrix}
\end{equation}
with pendulum angle $x_1$, angular velocity $x_2$, input $u$, that is a force that acts on the mass center of the pendulum, and a constant $\omega_0$. As in \cite{hertneck2019nonlinear}, we chose $\omega_0 = 0.1$, $\kappa(x) = \frac{31.6x_1+40.4x_2+sin(x_1)}{cos(x_1)}$ and $V_\CLF(x) = 1.278 x_1^2 + 0.632 x_1 x_2 + 0.404 x_2^2 $. For that choice, the DDS~\eqref{eq_cont_sys_comp} satisfies
for $c = 0.258$ and $\sigma = 0.35$ the assumptions of Theorem~\ref{theo_cont_stab} with 
$\left(\frac{3 (1-\sigma)}{2 \mu_c M_{\max,c}}\right)^2 = 2.77\cdot 10^{-5}$ and $(1+2L_1)^{-1} \approx \frac{1}{4.3}$. Thus, we chose $h= 2.77\cdot10^{-5} s$ as the sampling period length, for which Algorithm~\ref{algo_cont_dyn_trig} stabilizes the DDS~\eqref{eq_cont_sys_comp} independent of the used prediction function $f_p$. To demonstrate the proposed approach, we use a prediction that is based on an erroneous Euler forward integration. Thus, we use $f_p(\hat{x}) = \hat{x} +1.05hf(\hat{x},\kappa(\hat{x}))$. 

\begin{figure}[tb]
	\centering
	\centering
	\includegraphics[width = \linewidth]{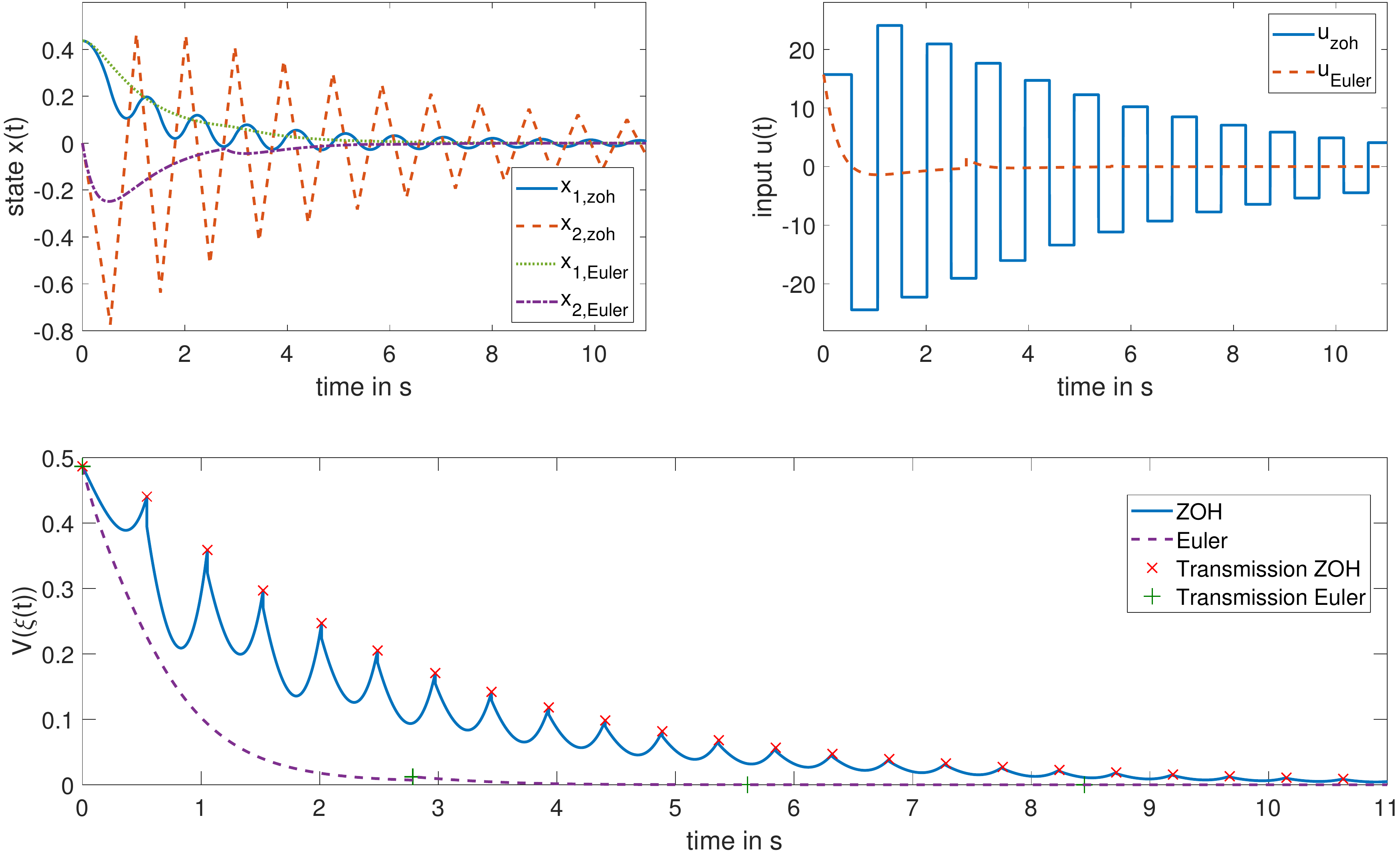}	\vspace{- 7mm}
	\caption{Comparison of ZOH-PETC (according to \cite{hertneck2019nonlinear}) and \modelbasedPETC with Euler forward prediction: System states (above left), inputs (above right) and Lyapunov functions (below) for the pendulum example.}
	\label{fig_example}
	\vspace{-7mm}
\end{figure}

In Figure~\ref{fig_example}, a comparison of the \modelbasedPETC mechanism with Euler forward based prediction to the ZOH-PETC mechanism from \cite{hertneck2019nonlinear} is given. Whilst the time evolution of the system states (above left) already heavily emphasizes that the \modelbased mechanism is superior for the considered example, the advantages become obvious from the input plot (above right) and from the Lyapunov function plot (below). The average input magnitude is significantly smaller and the Lyapunov function converges much faster for the \modelbasedPETC mechanism than for the ZOH-PETC mechanism. Nevertheless, significantly fewer transmissions are required for the \modelbasedPETC. Whilst the ZOH-PETC has an average time between two transmissions of $0.47 s$, there are at least $2.5 s$ between two transmissions of the \modelbasedPETC.

This emphasizes that the \modelbasedPETC mechanism can significantly improve the performance of the closed-loop system whilst reducing the amount of communication in comparison to the ZOH-PETC mechanism if a relatively simple and erroneous prediction is used even though the worst case guarantees are the same as for the ZOH-PETC mechanism.

\begin{rema}
For the considered example, the \modelbasedPETC mechanism triggers transmissions event tough the value of the Lyapunov function is much lower as it would be required to satisfy the convergence criterion (also satisfied by the trajectory for the ZOH-PETC mechanism). This is caused by the sufficient condition for the convergence criterion from Proposition~\ref{prop_cont_conv}, and can be omitted by using directly $S(t,x)$ for the trigger decision, if $S(t,x_0)$ is available. 
\end{rema}

\section{Conclusion}
\label{sec_conc}
In this paper, we presented an approach to combine PETC with MBNC for nonlinear systems. The proposed approach is based on predicting the behavior of the continuous-time plant at the actuator based on a discretization of the plant dynamics, if no current state information is received. Different approaches can be employed to implement the sampled-data prediction on simple hardware with limited computational capabilities. Even with such an inexact prediction, the \modelbased mechanism can reduce the required amount of communication significantly in comparison to prediction-free PETC approaches or to time-triggered control. Stability and a user-defined convergence speed are nevertheless guaranteed.

\vspace{-2mm}
\bibliography{sources}
\appendix
\subsection{Proof for Proposition~\ref{prob_michel}}
\label{append_a}
\begin{proof} This proof follows directly from the proof of Theorem 6.4.2 from \cite{michel2015stability}.
	We have to show that 
	\begin{equation}
	\label{eq_eps_del}
	\forall \epsilon> 0, \exists \delta(\epsilon) > 0, s.t. \norm{\xi_0} \leq \delta(\epsilon) \Rightarrow \norm{\xi(t)} \leq \epsilon~ \forall t \geq 0
	\end{equation}and
	\begin{equation}
	\label{eq_asym_conv}
	\norm{\xi(t)} \rightarrow 0~ \text{\normalfont as }t \rightarrow \infty
	\end{equation}  for all $\xi_0 \in \mathcal{X}_{c,2}$, to show asymptotic stability with region of attraction $\mathcal{X}_{c,2}$. Subsequently, we use the abbreviation $z_l := V(\xi(\tau^a_l))$. From \eqref{eq_non_mon_desc2}, we observe that $z_l \leq z_0$ for all $l \geq 0$ and $z_0 \leq c$. Using \eqref{eq_non_mon_dec4} and \eqref{eq_non_mon_desc1}, we obtain for $\tau^a_l \leq t <\tau^a_{l+1}$ that 
	\begin{equation}
	\label{eq_cont_z_bound}
	\norm{\xi(t)} \leq \alpha_3^{-1}\left(\alpha_5 \left(z_l\right)\right) 
	\end{equation}
	and thus $\norm{\xi(t)} \leq \alpha_3^{-1}\left(\alpha_5 \left(z_0\right)\right)$ for all $t \geq 0$.
	Hence, $\norm{\xi(t)} \leq \epsilon$ if $\norm{\xi_0} \leq \min \left\lbrace \alpha_4^{-1} \left(\alpha_5^{-1} \left(\alpha_3 \left( \epsilon \right)\right)\right), \alpha_4^{-1} (c) \right\rbrace := \delta(\epsilon)$ and thus \eqref{eq_eps_del} is satisfied. To show \eqref{eq_asym_conv}, we consider again \eqref{eq_non_mon_desc2} and observe by using \eqref{eq_non_mon_desc2} recursively that $z_{l+1} - z_{0} \leq - (\tau^a_{l+1} - 0) \gamma_2(z_l)$. With \eqref{eq_non_mon_dec3}, this yields $z_l \leq \gamma_2^{-1} \left( \frac{z_0}{l \underline{\eta}}\right)$. Consequently, condition \eqref{eq_asym_conv} follows directly due to \eqref{eq_cont_z_bound}.
\end{proof}
\subsection{Proof for Proposition~\ref{prop_help_eq}}
\label{append_c}
\begin{proof}
	Due to Assumption~\ref{asum_cont_clf}, \eqref{eq_non_mon_dec4} holds.  Next, observe that 
	\begin{equation}
	\xi_2(\tau^a_l+r) = \conc{f_p}{j}(\xi_2(\tau^a_l))
	\end{equation}
	for $jh \leq r < (j+1) h$ and all $j \in \left\lbrace 0,\dots, \floor*{\frac{\tau^a_{l+1} - \tau^a_l}{h}}-1\right\rbrace$.
	
	Due to the continuous differentiability of $f_p$, on each compact set $\mathcal{C}$ there is  a Lipschitz constant $L_\mathcal{C}$, such that $\norm{f_p(x)} \leq L_\mathcal{C}\norm{x}$ for all $x \in 
	\mathcal{C}$. Hence, for $x \in \mathcal{X}_{c}$ it holds that $\norm{\conc{f_p}{j}(x)} \leq \left(\Pi_{i=1}^j L_{\mathcal{C}_i}\right) \norm{x}$, where $L_{\mathcal{C}_v}$ is a Lipschitz constant for the set $\mathcal{C}_v$, that is recursively defined as $\mathcal{C}_v :=\left\lbrace x | \norm{x} \leq \left(\Pi_{i=1}^{v-1} L_{\mathcal{C}_i}\right) x_{\max} \right\rbrace$ with $x_{\max} = \underset{x \in \mathcal{X}_c}{\max} \norm{x}$ for $v \geq 2$ and $\mathcal{C}_1 = \mathcal{X}_c$.

	 We know from \eqref{eq_non_mon_dec3} that $\tau^a_{l+1} - \tau^a_l \leq \overline{\eta}.$ Let $j_{\max} := \floor*{\frac{\overline{\eta}}{h}}-1$. Thus it holds, that 
 	\begin{align}
 	\norm{\xi_2(\tau^a_l+r)} \leq&  \max\left\lbrace 1,\Pi_{i=1}^{j_{\max}} L_{\mathcal{C}_i} \right\rbrace    \norm{\xi_2(\tau^a_l)} \nonumber \\
 	\leq& \max \left\lbrace 1,\left(L_{\mathcal{C}_{j_{\max}}}\right)^{j_{\max}}\right\rbrace \norm{\xi_2(\tau^a_l)} . 	\label{eq_xi2_bound}
 	\end{align}
	Observe that due to \eqref{eq_cont_clf2}, it follows that $\norm{\xi_2(\tau^a_l)} \leq \alpha_1^{-1} \left( V_\CLF(\xi_2(\tau^a_l))\right)$ holds and $V_\CLF(\xi_2(\tau^a_l+r)) \leq \alpha_2 (\norm{\xi_2(\tau^a_l+r)})$ holds for all $r$. Furthermore, we have $\xi_1(t) = x(t)$ for all $t$ and  $\xi_2(\tau^a_{l+1}) = \xi_1(\tau^{a-}_{l+1}) = x(\tau^a_{l+1})$ due to the structure of the DDS~\eqref{eq_cont_sys_comp}. Thus, we can conclude from \eqref{eq_v_desc} and \eqref{eq_xi2_bound} that \eqref{eq_non_mon_desc1} holds with $\alpha_5(\cdot) := \max \left\lbrace (\cdot), \alpha_2\left(  \left(L_{\mathcal{C}_{j_{\max}}}\right)^{j_{\max}} \alpha_1 ^{-1}(\cdot)\right)\right\rbrace $. Moreover, \eqref{eq_non_mon_desc2} follows  due to the structure of the DDS from \eqref{eq_v_desc_2} and $\xi(\tau^a_l) \in \mathcal{X}_{c,2}$ follows due to the structure of the DDS from $x(\tau^a_l) \in \mathcal{X}_c$.
\end{proof}
\subsection{Proof for Theorem~\ref{theo_cont_stab}}
\label{append_b}

\begin{proof}
	This proof follows the same structure as the proof of Theorem~1 from \cite{hertneck2019nonlinear}. We will show that the conditions from Proposition~\ref{prob_michel} hold for the DDS~\eqref{eq_cont_sys_comp} and the Lyapunov function $V(\xi) = \frac{1}{2} \left(V_\CLF(\xi_1) +V_\CLF(\xi_2)\right)$. Moreover, we show that \eqref{eq_cont_v_smaller} holds for $\tau^a_l \leq t \leq \tau^a_{l+1}$ and $V_\CLF(x(\tau^a_{l+1})) \leq S(\tau^a_{l+1},x_0)$ if $V_\CLF(x(\tau^a_{l})) \leq S(\tau^a_{l},x_0)$. 
	
	In Algorithm~\ref{algo_cont_dyn_trig}, a transmission is triggered if the number of sampling periods between two successive transmissions exceeds the bound $\nu$. Since in addition, the time between two transmissions is lower bounded by the length of the sampling period $h$, we know that \eqref{eq_non_mon_dec3} holds. Thus, if \eqref{eq_v_desc} and \eqref{eq_v_desc_2} hold for all $l\in\mathbb{N}_0$, then \eqref{eq_non_mon_dec4}, \eqref{eq_non_mon_desc1} and \eqref{eq_non_mon_desc2}  hold due to Proposition~\ref{prop_help_eq} and the assumptions of the theorem. We consider now an arbitrary $l\in \mathbb{N}_0$ with $V_\CLF(x(\tau^a_{l})) \leq S(\tau^a_l,x_0)$.
	
	First assume that $\tau^a_{l+1}$ = $\tau^a_l + h$. In this case, we know from Lemma~\ref{lem_cont_stab2} that \eqref{eq_v_desc} and \eqref{eq_v_desc_2} hold for $\tau^a_l$ and $\tau^a_{l+1}$ with $\gamma_2 = \sigma \gamma$  and that  \eqref{eq_cont_v_smaller} holds for $\tau^a_l \leq t \leq \tau^a_{l+1}$ and $V_\CLF(x(\tau^a_{l+1})) \leq S(\tau^a_{l+1},x_0)$ if $V_\CLF(x(\tau^a_l)) \leq S(\tau^a_l,x_0)$. 
	
	Now assume that $\tau^a_{l+1} > \tau^a_l + h$. In this case, we can choose $\tilde{k}\geq 0$ and $p> 1$, such that $\tau^a_{l} = \tau^s_{\tilde{k}}$ and $\tau^a_{l+1} = \tau^s_{\tilde{k}+p}$.
	Henceforth, we consider an arbitrary $j \in \left\lbrace1 ,\dots,p-1 \right\rbrace$. We know from line~\ref{line_trigger} of Algorithm~\ref{algo_cont_dyn_trig} that at $k=\tilde{k}+j$ with $\tau^s_{\tilde{k}+j} = (\tilde{k}+j)h$
	\begin{align}
	&V_\CLF(x(\tau^s_{\tilde{k}+j})) + \lambda_{\tilde{k}+j}\nonumber \\
	 <& V_\CLF(x(\tau^a_l))-(\tau^s_{\tilde{k}+j}-\tau^a_l+h) \sigma \gamma(V_\CLF(x(\tau^a_l)))	\label{eq_algo_desc}
	\end{align}
	holds with $\lambda_{\tilde{k}+j}$ from Algorithm~\ref{algo_cont_dyn_trig}, and that $\hat{x}(\tau^s_{\tilde{k}+j}) \in 
	\mathcal{X}_c$, since otherwise, a transmission would have been triggered at time $\tau^s_{\tilde{k}+j}$.
	
	We use subsequently the abbreviation $u^*_{\tilde{k}+j} := \kappa(\hat{x}(\tau^s_{\tilde{k}+j}))$ for which it holds  that $u^*_{\tilde{k}+j} = u^\text{\normalfont sens}$ at $\tilde{k}+j = k $. We define the auxiliary function 
	\begingroup
	\begin{align}
	& V_\boundd(x(\tau^s_{\tilde{k}+j}),u^*_{\tilde{k}+j},\ttwo)  \nonumber\\
	:=&  V_\CLF(x(\tau^s_{\tilde{k}+j})) + \ttwo \timeder{x(\tau^s_{\tilde{k}+j})}{u^*_{\tilde{k}+j}} \nonumber \\
	&+ \frac{2}{3} \ttwo^{3/2} \mu_c \left( \norm{\partder{x(\tau^s_{\tilde{k}+j})}} \norm{f(x(\tau^s_{\tilde{k}+j}),u^*_{\tilde{k}+j})} \right. \nonumber \\
	&+ \left. \norm{f(x(\tau^s_{\tilde{k}+j}),u^*_{\tilde{k}+j})}^2\right). \nonumber
	\end{align}
	\endgroup
	For $\ttwo = h$, we obtain with \eqref{eq_algo_desc} and the definition of $\lambda_{\tilde{k}+j}$ from Algorithm~\ref{algo_cont_dyn_trig} 
	\begin{align}	
	&V_\boundd(x(\tau^s_{\tilde{k}+j}),u^*_{\tilde{k}+j},h) \nonumber \\
	 <&  V_\CLF(x(\tau^a_l))-(\tau^s_{\tilde{k}+j}-\tau^a_l+h) \sigma \gamma(V_\CLF(x(\tau^a_l))).\label{eq_cont_bound_eq}
	\end{align}

	The second derivative of $V_\boundd(x(\tau^s_{\tilde{k}+j}),u^*_{\tilde{k}+j},\ttwo)$ w.r.t. $m$ is positive for  $0 < \ttwo \leq  h $, and thus, $V_\boundd(x(\tau^s_{\tilde{k}+j}),u^*_{\tilde{k}+j},\ttwo)$ must have its maximum w.r.t. $\ttwo$ on the interval $0 \leq \ttwo \leq h $ either at $\ttwo = 0$ or at $\ttwo =  h $, i.e., 
	\begin{align}
	& V_\boundd(x(\tau^s_{\tilde{k}+j}),u^*_{\tilde{k}+j},\ttwo ) \nonumber \\
	\leq&\max \left\lbrace  V_\CLF(x(\tau^s_{\tilde{k}+j})), V_\boundd(x(\tau^s_{\tilde{k}+j}),u^*_{\tilde{k}+j},h)    \right\rbrace < c \label{eq_v_bound_r}.
	\end{align}

	Now, we consider an auxiliary system starting at time $\tau^s_{\tilde{k}+j} $ defined by $\dot{\tilde{x}} = f(\tilde{x}, u^*_{\tilde{k}+j}),~\tilde{x}(0)  = x(\tau^s_{\tilde{k}+j})$, for which a unique solution exists at least until $t^*$, and have $V_\CLF(\tilde{x}( \ttwo )) = V_\CLF(x(\tau^s_{\tilde{k}+j}+\ttwo)) $ for $0 \leq \ttwo \leq \tau^s_{\tilde{k}+j+1}-\tau^s_{\tilde{k}+j}$. Then by Corollary~\ref{coro_cont_lyap}, we obtain  for $0\leq \ttwo\leq \min \left\lbrace \tau^s_{\tilde{k}+j+1}-\tau^s_{\tilde{k}+j}, (1+2L_{1,c})^{-1},t^* \right\rbrace$ 
	\begin{align}
	&\mathcal{L}_fV(\tilde{x}(\ttwo),u^*_{\tilde{k}+j})\nonumber\\ 
	\leq& \mathcal{L}_fV(x(\tau^s_{\tilde{k}+j}),u^*_{\tilde{k}+j})  \nonumber	+	 \sqrt{\ttwo} \mu_c   \left(\norm{ V_\CLF^{'}(x(\tau^s_{\tilde{k}+j})}  \vphantom{ \norm{f(x(\tau^s_{\tilde{k}+j}),u^*_{\tilde{k}+j}}^2 } \right. \\ 
	&\left.\norm{f(x(\tau^s_{\tilde{k}+j}),u^*_{\tilde{k}+j}} 	+ \norm{f(x(\tau^s_{\tilde{k}+j}),u^*_{\tilde{k}+j}}^2 \right). \nonumber
	\end{align}
	A time integration yields
	\begin{equation}
	V_\CLF(x(\tau^s_{\tilde{k}+j}+\ttwo))  =  V_\CLF(\tilde{x}( \ttwo )) \leq V_\boundd(x(\tau^s_{\tilde{k}+j}),u^*_{\tilde{k}+j},\ttwo) \label{eq_cont_bound2}
	\end{equation}
	for $0\leq \ttwo\leq \min \left\lbrace \tau^s_{\tilde{k}+j+1}-\tau^s_{\tilde{k}+j}, (1+2L_{1,c})^{-1},t^* \right\rbrace$. We know from \eqref{eq_v_bound_r} and the choice of $h$ according to \eqref{eq_h_max} that 
	$ h \leq \min \left\lbrace \tau^s_{\tilde{k}+j+1}-\tau^s_{\tilde{k}+j}, (1+2L_{1,c})^{-1},t^* \right\rbrace$. Thus,  \eqref{eq_v_desc_2}  follows from \eqref{eq_cont_bound_eq} and \eqref{eq_cont_bound2} for $j = p-1$ with $\tau^a_{l+1} = \tau^s_{\tilde{k}+p}+h$ and $\gamma_2 = \sigma \gamma$. Moreover,  \eqref{eq_v_desc} is guaranteed for $0 \leq r \leq h$ due to Lemma~\ref{lem_cont_stab2} and for $jh \leq r \leq (j+1)h $ due to \eqref{eq_cont_bound_eq}, \eqref{eq_v_bound_r} and \eqref{eq_cont_bound2}. Since \eqref{eq_cont_bound_eq}, \eqref{eq_v_bound_r} and \eqref{eq_cont_bound2} hold for each $j \in \left\lbrace 1,\dots,p-1\right\rbrace$, and $\tau^a_{l+1} - \tau^a_l = hp$, \eqref{eq_v_desc} holds between $\tau^a_{l}$ and $\tau^a_{l+1}$.

	It remains to show that the convergence criterion \eqref{eq_cont_v_smaller} holds for $\tau^a_l$ and $\tau^a_{l+1}$ and that $V_\CLF(x(\tau^a_{l+1})) \leq S(\tau^a_{l+1},x_0)$ for the considered case, if $V_\CLF(x(\tau^a_{l})) \leq S(\tau^a_{l},x_0)$. For  $\tau^a_l \leq t \leq \tau^a_l+h$, \eqref{eq_cont_v_smaller} is directly implied by Lemma~\ref{lem_cont_stab2} and $V_\CLF(x(\tau^a_l))\leq S(\tau^a_{l},x_0)$. For  $\tau^s_{\tilde{k}+j} \leq t \leq \tau^s_{\tilde{k}+j}+h,$  we can use for each $j\in\left\lbrace 1 \dots,p-1\right\rbrace$ Proposition~\ref{prop_cont_conv} with $s = \tau^a_l$, $r = \tau^s_{\tilde{k}+j}-\tau^a_l+h$, $C_1 = V_\boundd(\tau^s_{\tilde{k}+j},u^*_{\tilde{k}+j},h)$ and $C_2 = V_\CLF(x(\tau^a_{l}))$ and \eqref{eq_algo_desc} and obtain that $V_\boundd(\tau^s_{\tilde{k}+j},u^*_{\tilde{k}+j},h) \leq S(\tau^s_{\tilde{k}+j}+h,x_0)$. Using additionally \eqref{eq_v_bound_r} and \eqref{eq_cont_bound2}, we see that  \eqref{eq_cont_v_smaller} holds for $\tau^s_{\tilde{k}+j} \leq t \leq \tau^s_{\tilde{k}+j}+h$ for each $j\in\left\lbrace 1,\dots,p-1 \right\rbrace$ and hence for $\tau^a_l \leq t \leq \tau^a_{l+1}$. Moreover, for $j=p-1$, we observe that $V_\CLF(x(\tau^a_{l+1})) \leq S(\tau^a_{l+1},x_0)$, for the considered case, which finishes this proof.	
\end{proof}

\end{document}